\theoremstyle{plain}
\newtheorem{definition}{Definition}[]
\newtheorem{theorem}{Theorem}[]
\newtheorem{corollary}{Corollary}[]
\newtheorem{lemma}{Lemma}[]
\newtheorem{remark}{Remark}[]
\renewcommand{\qedsymbol}{$\blacksquare$}
\newcommand{\mc}{\mathcal}
\newcommand{\lt}{\left}
\newcommand{\rt}{\right}
\begin{document}

\preprint{APS/123-QED}

\title{Reliable Quantum Memories with Unreliable Components}
\author{Anuj K.\ Nayak}
\author{Eric Chitambar}
\author{Lav R.\ Varshney}
\affiliation{
Coordinated Science Laboratory, University of Illinois at Urbana-Champaign, Urbana, IL 61801, USA
}


\begin{abstract}
Quantum memory systems are vital in quantum information processing for dependable storage and retrieval of quantum states. Inspired by classical reliability theories that synthesize reliable computing systems from unreliable components, we formalize the problem of reliable storage of quantum information using noisy components. 
We introduce the notion of stable quantum memories and define the storage rate as the ratio of the number of logical qubits to the total number of physical qubits, as well as the circuit complexity of the decoder, which includes both quantum gates and measurements. We demonstrate that a strictly positive storage rate can be achieved by constructing a quantum memory system with quantum expander codes. Moreover, by reducing the reliable storage problem to reliable quantum communication, we provide upper bounds on the achievable storage capacity. In the case of physical qubits corrupted by noise satisfying hypercontractivity conditions, we provide a tighter upper bound on storage capacity using an entropy dissipation argument. {\color{black}Furthermore, observing that the time complexity of the decoder scales non-trivially with the number of physical qubits, achieving asymptotic rates may not be possible due to the induced dependence of the noise on the number of physical qubits. In this constrained non-asymptotic setting, we derive upper bounds on storage capacity using finite blocklength communication bounds.} 
Finally, we numerically analyze the gap between upper and lower bounds in both asymptotic and non-asymptotic cases, and provide suggestions to tighten the gap. 
\end{abstract}

\maketitle

\section{Introduction} 

Quantum memory systems play a crucial role in quantum information processing, enabling the reliable storage and retrieval of quantum states \cite{qm2023, microarch2023}. Quantum information storage is a critical element for a variety of computing and communication protocols, providing a foundation for secure and efficient quantum operations.

Synthesis of reliable classical computing systems using unreliable noisy components was first studied by von Neumann \cite{von1956probabilistic}. Construction of classical reliable memory systems using noisy registers and noisy logical gates was investigated in the works of Taylor \cite{taylor1968reliable} and Kuznetsov \cite{kuznietsov1973information}. Necessary and sufficient conditions for reliable storage were improved in subsequent works \cite{varshney2015toward, chilappagari2007fault, varshney2011performance}. The objective of these works has been the storage of classical information reliably for long durations with little circuit complexity overhead.

It is natural to ask whether quantum information can be stored reliably using noisy components. The question is perhaps more pertinent in the quantum setting because, unlike the classical case where noise is due to defects, practical limitations, or resource constraints, noise in quantum systems is more fundamental---noise occurs due to interaction with the environment causing quantum states to decohere rapidly over time. Classical information storage capacity is measured in (logical) bits normalized by the number of binary registers and logic gates. However, one can store either classical information or quantum bits (qubits) in quantum states. In this paper, we consider the latter, which enables a memory system to store entangled qubits across multiple systems. Similar to the classical setting, we define quantum storage capacity as the number of logical qubits normalized by the total number of physical qubits (data qubits and ancillas), quantum gates, and measurements.

The idea of a quantum memory based on quantum error correction was first proposed as a 9-qubit code to store one logical qubit \cite{shor1995scheme}. A quantum memory system using a two-to-four qubit encoding scheme is proposed in \cite{gingrich2003all}, for its use as a repeater to reduce photon loss in quantum communication. Further, \cite{terhal2015quantum} provides a review of qubit and subsystem stabilizer codes, and their potential application in protecting quantum information in quantum memories.

There are currently several contenders for physically realizing quantum computing/memory systems including superconducting, ion-trap, quantum dot, and optical systems. In certain quantum memory systems, quantum states are mapped between photons and atomic systems, allowing effective storage of quantum information \cite{liu2001observation, phillips2001storage, schori2002recording}. Alternatively, the concept of cyclic quantum memory proposed in \cite{gingrich2003all} constitutes a delay line loop, complemented by a small linear optical quantum computing circuit, to repeatedly execute a quantum error-correction code (QECC) and preserve quantum coherence.

Here we focus on a general model of a quantum memory system, whose reliability is ensured using quantum error correction codes and correcting circuits. We prove that the achievable storage rate---defined as the rate of logical qubits stored normalized by circuit complexity rather than the number of channel uses---is a strictly positive constant. This achievability argument uses a class of quantum low-density parity-check (LDPC) codes known as quantum expander codes. These codes are of particular interest due to their inherent robustness in combating errors and providing efficient error correction, with little circuit complexity in their decoders. {\color{black} Moreover, we reduce the reliable storage problem to reliable communication of quantum information to obtain converse upper bounds on storage capacity. Inspired from works on classical reliable memories, we extend an entropy dissipation argument to the quantum setting to tighten the upper bounds further. Thus, we connect the fields of circuit complexity and information theory for the first time in the quantum setting.

Note that the achievable storage rate being strictly positive is not obvious because most quantum error correcting codes such as surface codes\cite{kitaev2003fault} and hypergraph product codes  \cite{tillich2013quantum} lack either an efficient decoder or a constant code rate. Additionally, fault-tolerant quantum computation (FTQC) is focused on proving the possibility of reliable quantum computation with constant overhead \cite{gottesman2013fault, fawzi2020constant}, but the emphasis is usually not on the exact value of this constant. Here, we take a step further---we are interested in the value of achievable storage rate (lower bound on storage capacity), in the context of reliable storage of logical qubits by also taking measurements into account towards circuit complexity. On the other hand, converse arguments to obtain upper bounds on quantum storage capacity have not been studied before as far as we know. Therefore, here we formalize the notion of entropy dissipation and noise in measurements, to obtain tighter converse bounds by extending the entropy dissipation argument in \cite{varshney2015toward} to the quantum setting.}

The remainder of the paper is organized as follows. Section~\ref{sec:definitions} provides a formal definition of a stable quantum memory system.  Section~\ref{sec:sys_model} presents the quantum memory system and noise models. Section~\ref{sec:background} provides a brief overview of quantum expander codes and bivariate bicycle LDPC codes. Section~\ref{sec:achievability} analyzes the achievability of stable memories and Section~\ref{sec:converse} gives converse bounds on storage capacity. Section~\ref{sec:time_complex_classical} provides a tighter non-asymptotic upper bound by considering the computational time complexity of the decoder. Section~\ref{sec:compare_lb_ub} compares the gap between the upper and lower bounds on storage capacities. Finally, Section~\ref{sec:conclusion} concludes the paper with limitations of the present work and potential future research directions.

\section{Reliable Memories}
Let us formalize notions of reliability, circuit complexity, and reliable memory systems.

\label{sec:definitions}
\begin{definition}
    \label{def:ckt_complexity}
    The complexity $\chi$ of a quantum memory is the total number of qubits (data qubits and ancillas), quantum gates, and measurements in the memory system.
\end{definition}

Measurements are usually not counted towards the quantum circuit complexity since measurements are typically performed at the end of the algorithm. However, in the context of quantum memories, measurements are performed periodically to protect the quantum state from decoherence. Therefore, in this paper, we count measurements towards circuit complexity.

\begin{definition}
    Storage overhead $\theta$ of a quantum memory is the ratio of the complexity of quantum memory constructed using unreliable components (qubits, gates, and measurements) to the number of logical qubits.
\end{definition}

\begin{definition}
    \label{def:asymp}
    Let $\{M_k\}$ be a sequence of quantum memories each with a storage capability of $k$ \textcolor{black}{logical} qubits. The quantum memory sequence is said to be \emph{stable} if it satisfies the following:
    \begin{enumerate}
        \item At time $t = 0$, each memory $M_k$ is initialized with a state $\rho_k = \mc{E}(\sigma_k)$, where $\sigma_k \in \mc{B}(\mc{H})$, $\dim(\mc{H}) = 2^k$ and $\mc{E}:\mc{B}(\mc{H}) \rightarrow \mc{B}(\mc{H}')$ is a CPTP map, which acts as an encoder.
        \item The complexity $\chi(M_k)$ of memory $M_k$ is bounded by $\theta k$, where $\theta$ is fixed for all $k$.
        \item For any $\epsilon>0$ and $T>0$, there exists a time $t \geq T$ and a CPTP map (a decoder) $\mc{D}:\mc{B}(\mc{H}') \rightarrow \mc{B}(\mc{H})$ such that $\underset{\sigma_k}{\min}~F(\mc{D}(\hat{\rho}_k), \sigma_k) \geq 1 - \epsilon$, for some $k$, {\color{black} where $\hat{\rho}_k$ is the state of $M_k$ at time $t$ given that its initial state at time $t=0$ was $\rho_k=\mc{E}(\sigma_k)$}.
    \end{enumerate}
    We assume that both the encoder $\mc{E}(\cdot)$ and the decoder $\mc{D}(\cdot)$ are ideal (noiseless), and are not a part of the memory system.
\end{definition}

\begin{definition}
    \label{def:storage_capacity}
    Quantum storage capacity $\mathfrak{Q}$ of a quantum memory is a number such that the storage overhead $\theta$ of any stable quantum memory sequence is bounded below by $1/\mathfrak{Q}$.
\end{definition}

Before proceeding further, let us recall a definition of the quantum capacity of a quantum channel. {\color{black}Note that unlike the quantum storage capacity $\mathfrak{Q}$ described in Definition~\ref{def:storage_capacity}, the quantum capacity $\mc{Q}$ does not consider circuit complexity into account.}

\begin{definition}
    \textit{Achievable communication rate and quantum capacity \cite{fawzi2022lower}:} For a qubit channel $\mathcal{N}:\mc{B}(\mc{H}) \rightarrow \mc{B}(\mc{H})$, a communication rate $R$ is called $\epsilon$-achievable if there exists $n_{\epsilon}$ such that for all $n \geq n_{\epsilon}$, there exists a sequence of encoders $\mc{E}_n:\mc{H}^{Rn} \rightarrow \mc{H}^{n}$ and decoders $\mc{D}_n:\mc{H}^{n} \rightarrow \mc{H}^{Rn}$ such that $F(\mc{D}_n \circ \mc{N}^{\otimes n} \circ \mc{E}_n(\sigma), \sigma) \geq 1-\epsilon$ for all $\sigma \in \mc{B}(\mc{H}^{Rn})$ and $\epsilon \in (0, 1]$. The quantum capacity $\mc{Q}(\mc{N})$ is the supremum of all achievable rates.
\end{definition}

\section{System Model}
\label{sec:sys_model}

In this section, we introduce a model of a quantum memory system inspired by \cite{taylor1968reliable}, and two noise models considered in this paper.

\subsection{Quantum Memory System Model}

\label{subsec:q_memory_sys_model}
\begin{figure*}
    \includegraphics[scale=0.57]{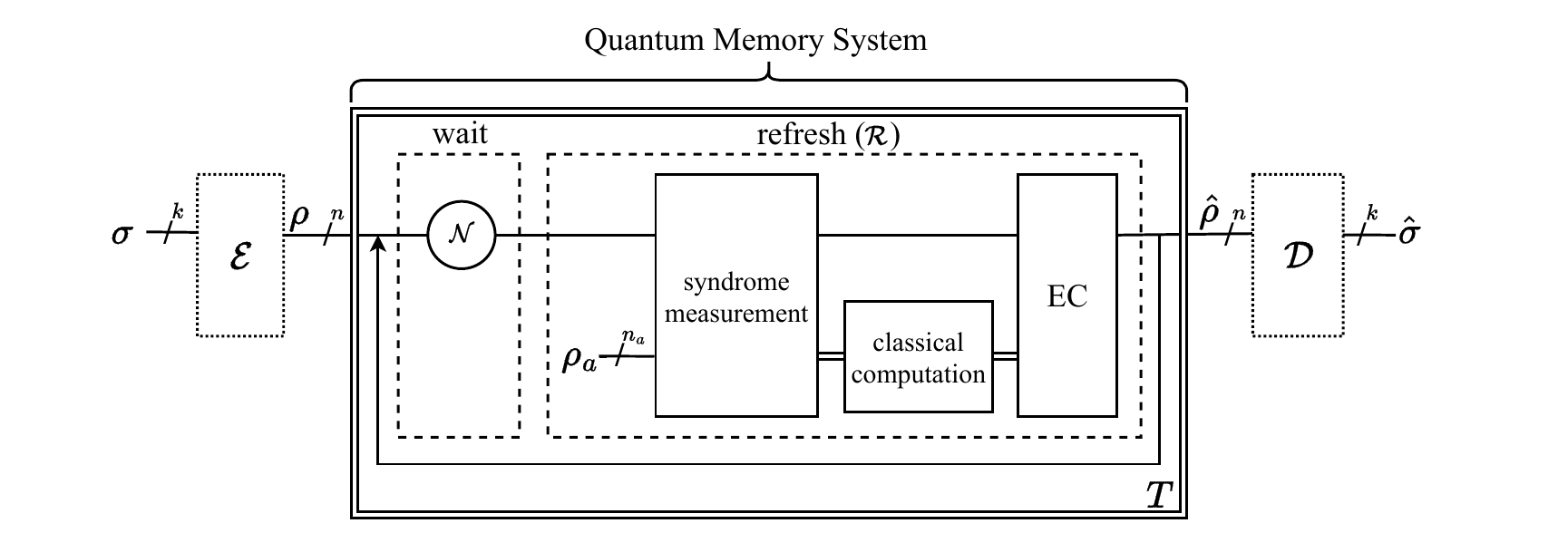}
    \caption{Schematic of a quantum memory system.}    \label{fig:reliable_quantum_memory_blk_diagram}
\end{figure*}

Figure~\ref{fig:reliable_quantum_memory_blk_diagram} shows a schematic of a quantum memory system. An encoder $\mc{E}:\mc{B}(\mc{H}) \rightarrow \mc{B}(\mc{H}'):\sigma \mapsto \rho$ is a CPTP map which encodes a $2^k$ dimensional quantum state $\sigma$ to a $2^n$ dimensional state $\rho$. A quantum memory system (enclosed in a double-lined box in Figure~\ref{fig:reliable_quantum_memory_blk_diagram}) takes the encoded state $\rho$ as input, which undergoes repeated wait-refresh cycles with a total duration $T$. A single wait-and-refresh is completed within a duration of time $\tau$. The (encoded) quantum state after time duration $T$ is indicated by $\hat{\rho}$, which is decoded by $\mc{D}$ to recover the quantum state $\hat{\sigma}$. During the wait duration, the qubits gradually decohere. The refresh block $\mc{R}$ contains the following sub-blocks: a syndrome measurement circuit which {\color{black} records} the error syndrome on the ancillas $\rho_a$, a classical decoder that obtains the estimate of the set of qubits requiring correction using the syndrome measurement, and an error correction (EC) sub-block that applies the correction (for example, bit-flip and phase-flip corrections) on noisy encoded qubits.

Let us clarify the system model further. In the first cycle, there is no wait duration, instead, the noise in physical qubits is due to the error in state preparation. We assume that the state preparation error is smaller than the error introduced during the wait duration in each subsequent cycle. Secondly, the loop from the output of the refresh block to the input of the wait block in Figure~\ref{fig:reliable_quantum_memory_blk_diagram} does not imply a physical feedback loop, because it does not represent a quantum evolution, and it could also violate the no-cloning theorem. Instead, it is similar to a template model in probabilistic graphical models. In fact, one can equivalently redraw the diagram by unrolling the circuit for $\lceil T/\tau \rceil$ time cycles. For example, in a superconducting quantum architecture, the refresh block corresponds to applying microwave pulses corresponding to the controlled unitary operations and measurements applied repeatedly in regular intervals on the same set of qubits, which does not involve a physical feedback loop. Some of the desirable properties of a reliable quantum memory system according to Definition~\ref{def:asymp} are as follows.
\begin{enumerate} 
    \item The residual physical noise after refresh should be bounded above by the {\color{black}error correction noise threshold} even when the syndrome measurements are noisy. This ensures the error is bounded over multiple cycles.
    \item There is an ability to create fresh ancilla qubits quickly at the beginning of every refresh step, and discard (or reset) them after error correction in each cycle (this is a necessary property of fault-tolerant quantum circuits).
    \item The refresh block should complete all operations (syndrome measurement and classical computing for error detection, and error correction) before {\color{black} the physical noise of the qubits exceeds the error correction noise threshold}.
\end{enumerate}

\subsection{Noise Model}

We consider two noise models, namely, local stochastic noise and depolarizing noise. Both noise models treat the corruption of quantum states as Pauli errors, i.e., bit flip, phase flip, and bit-phase flip errors.

\paragraph{Local Stochastic Noise}
Let $V$ be the set of data qubits (variable nodes) and $C$ be the set of ancillas (check nodes). {\color{black} An error $(E, D)$ is a random variable, where $E \subseteq V$ indicates the qubits with Pauli $X$ error\footnote{It suffices to consider one of $X$ or $Z$-type error due to symmetry \cite{fawzi2018efficient}.} and $D \subseteq C$ indicates the the syndrome measurement error, satisfying the local stochastic error model \cite{fawzi2020constant} with parameters $(p, q)$ if for all $S \subseteq V$ and $T \subseteq C$
}
    \begin{equation}
        \mathbb{P}[S \subseteq E, T \subseteq D] \leq p^{|S|} q^{|T|}.    
    \end{equation}
    In other words, the location of the errors is arbitrary but the probability of an error decays exponentially with its weight.

\begin{remark}
    The syndrome error parameter $q$ encapsulates all the following sources of errors, namely, errors in preparing the ancillas, noisy controlled gate operations involved in syndrome measurement, measurement errors, and gate errors in the correction step. This is bounded because the weights of the stabilizers are a constant for LDPC codes (will be discussed in Section~\ref{subsec:q_expander_codes}).
\end{remark}

\paragraph{Depolarizing Noise} 

The depolarizing noise with parameter $\tilde{p}$ acting on a qubit is described as follows \cite{khatri2020principles}: 
\begin{equation}
    \mathcal{N}_{\tilde{p}}(\rho) = (1-\tilde{p}) \rho + \frac{\tilde{p}}{3} (X \rho X + Y \rho Y + Z \rho Z).
\end{equation}
When a set of qubits is affected by depolarizing noise, unlike the local stochastic noise, we assume that the noise across qubits acts independently.

\section{Background on Quantum Error Correction}
\label{sec:background}
To be self-contained and establish notation, here we provide a brief overview of quantum error correction. A quantum error correction code maps a quantum state to a higher-dimensional subspace to protect it from decoherence by detecting (and correcting) noise-induced excursions from the code subspace. Fortunately, it is sufficient to correct only bit-flip and phase-flip errors to correct any arbitrary error in a quantum state \cite[Theorem~2]{gottesman2010introduction}. In this context, the stabilizer formalism provides a framework to construct quantum error correction codes with desired properties \cite{gottesman2010introduction}.

\subsection{Stabilizer Formalism, CSS, and LDPC Codes}
A stabilizer is a subset of $n$-qubit Pauli operators $\{\pm 1, \pm i\} \times \{I, X, Y, Z\}^n$ that forms an Abelian group. The corresponding quantum stabilizer code is defined as the +1 eigenspace of the stabilizer. An especially elegant stabilizer code is provided by a CSS (Calderbank, Shor, and Steane) construction using two classical codes $\mc{C}_X$ (to correct bit-flip errors) and $\mc{C}_Z$ (to correct phase-flip errors) obeying $\mc{C}^{\perp}_X \subset \mc{C}_Z$ (or $\mc{C}^{\perp}_Z
\subset \mc{C}_X$), which is equivalently stated using the corresponding parity check matrices as $H_X H_Z^T = 0$. Syndrome measurement is performed by measuring the eigenvalues of the stabilizer using entangling gates between the data qubits and ancillas. Error correction involves  applying a set of X and/or Z gates corresponding to the error pattern recovered from the syndrome.

Quantum CSS codes constructed using a pair of classical LDPC codes (quantum LDPC codes) are practically interesting because the stabilizers have constant weight and each qubit is part of a constant number of stabilizers, enabling computationally efficient decoders. A particular instance from the family of quantum LDPC codes is a quantum expander code proposed in \cite{leverrier2015quantum}, described as follows.

\subsection{Quantum Expander Codes}
\label{subsec:q_expander_codes}
A quantum expander code is constructed as a hypergraph product of the Tanner graph of two classical expander codes.\footnote{We assume the two classical codes are identical.} Specifically, let $\mc{C}$ be a classical expander code with Tanner graph $G = \{A \cup B,~\mc{E}\}$ which is a $(\gamma_A, \delta_A, \gamma_B, \delta_B)$-expanding graph with constant left and right degrees, $d_A$ and $d_B$, respectively. Denote the cardinalities $|A| = n_A$ and $|B| = n_B$. The hypergraph product code of $\mc{C}$ with itself creates a new Tanner graph with $(n_A + n_B)^2$ vertices, where the set of qubits (variable nodes) are indexed by $A^2 \cup B^2$, the $X$-type and $Z$-type generators (check nodes) are indexed by $A \times B$ and $B \times A$, respectively. The hypergraph product induces parity check matrices, $H_X = (I_{n_A} \otimes H,~H^T \times I_{n_B})$ and $H_Z = (H \otimes I_{n_A},~I_{n_B} \times H^T)$ such that $H_X H_Z^T = 0$, validating that it is a CSS code. The generators of the quantum expander code have a constant weight $d_A + d_B$, and it is LDPC with parameters $[[n, k, d_{\min}]]$, where $n = n_A^2 + n_B^2$, $k \geq (n_A - n_B)^2$, and $d_{\min} = d_{\min}(\mc{C})$.

Quantum expander codes have constant asymptotic rate, essential to achieving a strictly positive asymptotic storage rate. Further, an efficient small set-flip decoding algorithm runs in $O(\log n)$ time \cite{fawzi2018efficient}.

\subsection{Bivariate Bicycle Quantum LDPC Codes}

Bivariate bicycle (BB) quantum LDPC codes \cite{bravyi2024high}, provide a high-rate alternative to the surface codes in the non-asymptotic (in the number of physical qubits) setting. The construction of a BB quantum LDPC code is based on $\ell \times \ell$ cyclic shift matrices $S_\ell$ whose $i$th row has only one non-zero entry at the column $i+1 ~(\text{mod } \ell)$. Denote $x = S_\ell \otimes I_m$ and $y = I_\ell \otimes S_m$. A BB code is described by the pair $(A, B)$ where $A$ and $B$ are of the form $\sum_{i=1}^{3} z_i^{\zeta_i}$, with $z_i \in \{x, y\}$, and $\zeta_i \in \{1, \ldots, \ell-1\}$ if $\zeta_i = x$, or $\zeta_i \in \{1, \ldots, m-1\}$ if $\zeta_i = y$. A BB quantum LDPC code is defined by the following $\ell m \times 2\ell m$ parity check matrices: $H_X = (A,B)$ and $H_Z = (B^T,A^T)$. The generators of BB quantum LDPC code have a constant weight of $6$ with parameters $[[n, k, d_{\min}]]$, where $n = 2\ell m$, $k = 2\dim(\ker(A) \cap \ker(B))$ and $d_{\min} = \min\{|v| : v\in \ker(H_X)\setminus \text{rs}(H_Z)\}$. Some examples of high-distance BB quantum LDPC codes found by numerical search are listed in \cite[Table~3]{bravyi2024high}.

The Tanner graph associated with the BB quantum LDPC codes has thickness $\leq 2$, which makes it a practically attractive choice for chip-based quantum circuits. However, unlike the quantum expander codes discussed earlier, it is unclear whether the asymptotic rate of the BB quantum LDPC code is strictly positive.

\section{Lower bounds (Achievability)}

\label{sec:achievability}

In this section, we try to find a lower bound on the storage capacity of a quantum {\color{black} memory}. In particular, we find a range of physical noise (local stochastic and depolarizing noise) levels for which the probability of error can be ensured to be arbitrarily small with the size of the quantum memory, using a specific quantum error correcting code. The ratio of the number of logical qubits to the total number of physical qubits (including ancillas), quantum gates, and measurements gives the lower bound on storage capacity.

{\color{black} In quantum expander codes, since each check (variable) node is connected to a constant number of variable (check) nodes, syndrome measurement and error correction can be completed in constant (in the number of physical data qubits $n$) time privided they are performed in parallel. Moreover, it is a standard assumption in FTQC that classical computation is fast and error-free \cite{gottesman2010introduction, gottesman2013fault, fawzi2020constant}. Therefore unless otherwise specified, we assume that the operations associated with the refresh block have a constant time complexity  with respect to $n$
. In this context, we assume that each wait and refresh cycle is completed within time duration $\tau$ constant with respect to $n$. Therefore, until Section~\ref{sec:time_complex_classical}, where we discuss the effect of classical computation time on storage capacity, we omit the argument $\tau$ while referring to the physical noise parameter, i.e., $p := p(\tau)$.}

\begin{table}
    \centering
    \vspace{2mm}
    \begin{tabular}{|c|l|}
         \hline
         \textbf{Parameter} & \multicolumn{1}{c}{\textbf{Description}} \vline  \\
         \hline
         $p$ & physical noise parameter of data qubits \\
         \hline
         $q$ & syndrome measurement noise parameter\\
         \hline
         $p_r$ & residual noise parameter of data qubits\\         
          & after refresh\\         
         \hline
         $p_{th}$ & error correction noise threshold\\
         \hline
         $P_e$ & probability of logical error \\
         \hline
    \end{tabular}
    \caption{Notations and their description.}
    \label{tbl:notations}
\end{table}

\begin{lemma}
    \label{lemma:pr_plus_p}
    Let $E_r$ be the {\color{black} residual} error after refresh and $E_w$ be the error introduced during the wait duration, both of which are local stochastic with parameters $p_r$ and $p$, respectively. The total error experienced by the quantum state at the input of every refresh operation is $(E_r \cup E_w, D)$ which is local stochastic with parameters $(p_r+p, q)$.
\end{lemma}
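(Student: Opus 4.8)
The plan is to verify the local stochastic condition directly from its definition. Concretely, I must show that for every $S \subseteq V$ and every $T \subseteq C$,
\[
\mathbb{P}[S \subseteq E_r \cup E_w,\ T \subseteq D] \leq (p_r + p)^{|S|}\, q^{|T|}.
\]
The whole argument is a reduction of the union error $E_r \cup E_w$ to the two underlying sources, followed by a binomial collapse.

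First I would rewrite the covering event $\{S \subseteq E_r \cup E_w\}$ in a form adapted to the two error sources. Setting $S_1 = S \cap E_r$ and $S_2 = S \setminus S_1$, one checks that $S_2 \subseteq E_w$, since any qubit of $S$ not caught by $E_r$ must be caught by $E_w$; hence the event is contained in the union, over all disjoint partitions $S = S_1 \sqcup S_2$, of the events $\{S_1 \subseteq E_r,\ S_2 \subseteq E_w\}$. A union bound then gives
\[
\mathbb{P}[S \subseteq E_r \cup E_w,\ T \subseteq D] \leq \sum_{S_1 \subseteq S} \mathbb{P}[S_1 \subseteq E_r,\ (S\setminus S_1) \subseteq E_w,\ T \subseteq D].
\]

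The key step is to factor each summand. Since the residual error $E_r$ is produced in the previous cycle, while the wait error $E_w$ and the syndrome error $D$ arise from physically distinct processes in the current cycle, I would treat $E_r$ as independent of the pair $(E_w, D)$. Applying the local stochastic bound for $E_r$ (parameter $p_r$) to the first factor and the joint local stochastic bound for $(E_w, D)$ (parameters $(p,q)$) to the second, each summand is at most $p_r^{|S_1|}\, p^{|S|-|S_1|}\, q^{|T|}$. Summing over subsets and grouping by cardinality, the binomial theorem collapses the sum:
\[
\sum_{S_1 \subseteq S} p_r^{|S_1|}\, p^{|S|-|S_1|}\, q^{|T|} = q^{|T|} \sum_{j=0}^{|S|} \binom{|S|}{j} p_r^{\,j}\, p^{\,|S|-j} = (p_r + p)^{|S|}\, q^{|T|},
\]
which is exactly the claimed bound.

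I expect the factorization step to be the main obstacle, as it is the only point where an assumption beyond the stated hypotheses enters: one must argue that the residual error of the preceding cycle is statistically independent of the freshly introduced wait and measurement errors. If full independence is unavailable, I would instead look for a conditional version—e.g.\ that $E_r$ remains local stochastic with parameter $p_r$ conditioned on $(E_w, D)$—which still carries the product bound through each summand. The remaining ingredients (the partition decomposition, the union bound, and the binomial summation) are routine and require no further structural assumptions.
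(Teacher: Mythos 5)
Your proof is correct, and it is essentially a self-contained expansion of what the paper does by citation. The paper's own proof contains no calculation: it invokes an external lemma (Lemma~11 of the reference it cites) for the statement that a union of local stochastic errors is again local stochastic with the summed parameter, and it handles $D$ by observing that ancillas are freshly prepared in each cycle, so the syndrome error is independent of the data-qubit errors and simply retains its parameter $q$. Your partition--union-bound--binomial argument is precisely the standard proof of that cited composition lemma, so in substance the two routes agree; yours has the advantage of making explicit where each hypothesis enters. Your closing caveat is also exactly the right one to flag: with only marginal local stochasticity and no independence between $E_r$ and $(E_w, D)$, the factorization step fails, and the best one can salvage in each summand (via $\mathbb{P}[A \cap B] \leq \sqrt{\mathbb{P}[A]\,\mathbb{P}[B]}$) yields only the weaker union parameter $\sqrt{p_r} + \sqrt{p}$ rather than $p_r + p$. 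The conditional fallback you propose---one error remaining local stochastic conditioned on the realization of the other---is sufficient to carry the product bound through each summand, and it is physically justified here by the same observation the paper makes: $E_w$ and $D$ are fresh noise processes of the current cycle, independent of the previous cycle's history that determines $E_r$. So the "extra" assumption you identify is the same implicit assumption underlying the paper's citation-based proof, not a gap in your argument.
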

\begin{proof}
    The residual error $E_{r}$ right after error correction is local stochastic with parameter $p_r$. Then, waiting for a certain duration of time before the next round of refresh incurs additional error $E_w$ which is local stochastic with parameter $p$. The total error before the second round of refresh is $E_{r} \cup E_w$, which is again local stochastic with parameter $p_r + p$ (see \cite[Lemma~11]{tomamichel2019quantum} for the proof that composition of local stochastic errors is also local stochastic). Ancillas are freshly prepared for capturing the syndrome measurement in each cycle. Therefore, the error in syndrome measurement $D$ is independent of the error in data qubits, and is local stochastic with parameter $q$.
\end{proof}

The following lemma provides a relationship between the probability of logical error and fidelity between the quantum states, which is required to prove Theorem~\ref{thm:qexp_storg_cap_lb}.

\begin{lemma}
    \label{lemma:prob_err_fidelity}
    Suppose a quantum state $\rho$ is corrupted by Pauli errors, and $\sigma$ is the quantum state conditioned on the occurrence of a logical error. Then, for any $\epsilon \in [0, 1]$, if the probability of logical error $P_e \leq \epsilon$, the fidelity is bounded as $F(\rho, \rho') \geq 1-\epsilon$, where $\rho' = (1-P_e) \rho + P_e \sigma$.
\end{lemma}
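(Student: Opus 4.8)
The plan is to read $\rho' = (1-P_e)\rho + P_e\sigma$ as the actual state produced by the recovery operation: with probability $1-P_e$ the logical information is recovered perfectly (yielding the intended state $\rho$), and with probability $P_e$ a logical error occurs, leaving the conditional state $\sigma$. The quantity to control is then $F(\rho,\rho')$, the overlap of the ideal state with this average output. Since the hypothesis gives $P_e \le \epsilon$, it suffices to prove the cleaner inequality $F(\rho,\rho') \ge 1-P_e$ and substitute $P_e \le \epsilon$ only at the very end. Note that the Pauli structure of the error is not actually needed for the inequality itself; it enters only in justifying that the output decomposes as the stated convex mixture.

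The key tool I would invoke is concavity of the (root) fidelity $F(\rho,\sigma) = \mathrm{Tr}\sqrt{\sqrt{\rho}\,\sigma\,\sqrt{\rho}}$ in its second argument, which is a direct consequence of the joint concavity of fidelity. Writing the mixture explicitly with weights $1-P_e$ and $P_e$ on the states $\rho$ and $\sigma$, concavity gives
\begin{equation}
F(\rho,\rho') = F\big(\rho,(1-P_e)\rho + P_e\sigma\big) \ge (1-P_e)\,F(\rho,\rho) + P_e\,F(\rho,\sigma).
\end{equation}
I would then use the two elementary facts $F(\rho,\rho) = 1$ and $F(\rho,\sigma) \ge 0$ to drop the second term, obtaining $F(\rho,\rho') \ge 1-P_e$, and finally apply $P_e \le \epsilon$ to conclude $F(\rho,\rho') \ge 1-\epsilon$.

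The step I expect to be most delicate is not any calculation but fixing the fidelity convention, and this is worth flagging because the bound is tight. With the root-fidelity convention the estimate comes out linearly in $P_e$ as above and lands exactly on the claimed $1-\epsilon$; had one instead used the squared fidelity, the analogous bound would read $(1-P_e)^2$, which falls short of $1-\epsilon$ for general $P_e \le \epsilon$ (it reaches $1-\epsilon$ only in the trivial case $\epsilon = 1$). Thus the argument hinges on the root convention, under which fidelity is manifestly jointly concave and the three ingredients---concavity in the second slot, $F(\rho,\rho) = 1$, and nonnegativity of $F(\rho,\sigma)$---combine immediately to give the result.
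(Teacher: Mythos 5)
Your concavity route is valid under the root-fidelity convention and is genuinely different from the paper's argument, which uses no concavity at all. The paper works in the squared convention, $F(\rho',\rho) = \left(\Tr\sqrt{\sqrt{\rho}\,\rho'\sqrt{\rho}}\right)^2$, substitutes the mixture to get $\sqrt{\rho}\,\rho'\sqrt{\rho} = (1-P_e)\rho^2 + P_e\sqrt{\rho}\,\sigma\sqrt{\rho}$, and then simply drops the positive semidefinite term $P_e\sqrt{\rho}\,\sigma\sqrt{\rho}$ inside the operator square root; since $t \mapsto \sqrt{t}$ is operator monotone, this can only decrease the trace, so
\begin{equation}
\Tr\sqrt{\sqrt{\rho}\,\rho'\sqrt{\rho}} \;\geq\; \Tr\sqrt{(1-P_e)\,\rho^2} \;=\; \sqrt{1-P_e}\,\Tr\rho \;=\; \sqrt{1-P_e},
\end{equation}
whence $F(\rho',\rho) \geq 1-P_e \geq 1-\epsilon$ for the squared fidelity itself.

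This exposes the one genuine error in your write-up: the claim that under the squared convention ``the analogous bound would read $(1-P_e)^2$'' and that the lemma therefore hinges on the root convention. The $(1-P_e)^2$ is an artifact of your particular route (bound the root fidelity by concavity, then square), not of the convention. The paper's computation above gives $\sqrt{F} \geq \sqrt{1-P_e}$, which is strictly stronger than your $\sqrt{F} \geq 1-P_e$ whenever $0<P_e<1$. Alternatively, the squared fidelity (Uhlmann's transition probability) is itself concave in each argument separately---purify $\sigma_1,\sigma_2$ with an extra orthogonal register and optimize the reference state to see $F(\rho, \lambda\sigma_1+(1-\lambda)\sigma_2) \geq \lambda F(\rho,\sigma_1) + (1-\lambda)F(\rho,\sigma_2)$---so applying concavity at the level of $F$ rather than $\sqrt{F}$ also yields the linear bound $F(\rho,\rho') \geq 1-P_e$ directly. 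Since the paper's proof (in particular its final evaluation $\left(\Tr\left(\sqrt{1-P_e}\,\rho\right)\right)^2 = 1-P_e$) makes clear that the squared convention is intended, your proof as written establishes only the weaker $(1-P_e)^2$ bound for the lemma as stated; either of the two fixes above closes that gap.
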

\begin{proof}
    The fidelity between $\rho$ and $\rho'$ is bounded as follows:
    \begin{equation}
        \begin{split}
            &F(\rho', \rho) = \Tr \lt( \sqrt{\sqrt{\rho} \rho' \sqrt{\rho}} \rt)^2 \\
            & = \Tr \lt( \sqrt{ (1-P_e)\sqrt{\rho} \rho \sqrt{\rho} + P_e \sqrt{\rho} \sigma \sqrt{\rho}  } \rt)^2 \\
            & \geq \Tr \lt( \sqrt{ (1-P_e)\sqrt{\rho} \rho \sqrt{\rho}} \rt)^2 \\
            & = \Tr \lt( \sqrt{ (1-P_e)} \rho \rt)^2 = 1-P_e \geq 1-\epsilon.
        \end{split}
    \end{equation}
\end{proof}

To determine an achievable storage rate, we shall consider a quantum memory system protected using a quantum expander code proposed in \cite{fawzi2018efficient}, which is constructed as a hypergraph product of two classical expander codes\footnote{We assume that the two classical codes are identical.}. The following lemma provides a bound on the logical probability of error decaying exponentially with the number of physical qubits provided the physical error is less than a threshold.

\begin{lemma}
For a quantum memory system protected using a quantum expander code described in \ref{subsec:q_expander_codes} against local stochastic noise with parameters $(p+p_r, q)$, the logical probability of error after one round of refresh is given by:
\begin{equation}
    P_e(n) \leq C n \lt( \frac{p + p_r}{p_{th}} \rt)^{C' \sqrt{n}}.
    \label{eqn:Pe_qexpander_qmemory}
\end{equation}
where $p_{th}$ is the fault-tolerant threshold, the residual error after error correction is local stochastic with parameter $p_r = K q^{1/c_0}$ satisfying $p+p_r < p_{th}$, $K$ is a constant independent of $p$ and $q$ (see \cite[proof of Theorem~3]{fawzi2020constant}), and $p_{th}$, $c_0$, $C$ and $C'$ are constants that depend on the parameters of quantum expander code (see \cite[Section~3.1]{fawzi2020constant} for details).
\end{lemma}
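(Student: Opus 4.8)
The plan is to derive the bound as an adaptation of the single-shot fault-tolerant decoding guarantee for quantum expander codes established in \cite{fawzi2018efficient, fawzi2020constant}, specialized to one wait-and-refresh cycle of the memory. By Lemma~\ref{lemma:pr_plus_p}, the data error at the input of the refresh block is local stochastic with parameter $p+p_r$ and the syndrome error is local stochastic with parameter $q$, so the setting matches exactly the hypotheses of the small-set-flip decoding analysis. The first ingredient I would invoke is the geometric structure of the code from Section~\ref{subsec:q_expander_codes}: for a classical expander code with linear minimum distance, the hypergraph product has $n = n_A^2 + n_B^2$ physical qubits and minimum distance $d_{\min} = \Theta(\sqrt{n})$, and crucially the small-set-flip decoder corrects every adversarial error whose (reduced) weight is below a constant fraction $w_0 = \Theta(\sqrt{n})$ of $d_{\min}$, even when the syndrome is itself corrupted.

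Second---and this is the heart of the estimate---I would convert this adversarial correction radius into a probabilistic tail bound under the local stochastic model. A logical error can occur only if the combined error support contains a minimal uncorrectable configuration, i.e.\ a connected set in the Tanner graph of weight at least $w_0$. Using the bounded left/right degrees $d_A, d_B$, the number of connected configurations of weight $w$ rooted at a given qubit is at most $b^{w}$ for a constant $b$ depending only on the degrees, and there are at most $n$ roots, giving at most $n\,b^{w}$ such configurations. The local stochastic hypothesis bounds the probability that a fixed weight-$w$ configuration lies in the combined data-plus-syndrome error by $(p+p_r)^{w}$ (the syndrome weight entering through the factor $q^{|T|}$). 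A union bound then gives
\begin{equation}
P_e(n) \le n \sum_{w \ge w_0} b^{w}(p+p_r)^{w} = n \sum_{w \ge w_0} \lt(\frac{p+p_r}{p_{th}}\rt)^{w},
\end{equation}
where I identify $p_{th} = 1/b$ as the threshold. Provided $p+p_r < p_{th}$, the geometric sum converges and is dominated by its first term at $w = w_0 = C'\sqrt{n}$, producing the stated $C\,n\,((p+p_r)/p_{th})^{C'\sqrt{n}}$ after absorbing the constant $1/(1 - (p+p_r)/p_{th})$ into $C$.

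The remaining and most delicate point is the noisy-syndrome bookkeeping that fixes the residual parameter $p_r = Kq^{1/c_0}$. Here I would cite the single-shot analysis of \cite[proof of Theorem~3]{fawzi2020constant}: with a syndrome of local stochastic parameter $q$, the decoder not only succeeds with the above probability but also leaves a post-correction residual error that is again local stochastic with parameter $p_r = Kq^{1/c_0}$, which is what closes the recursion needed for multi-round stability. The main obstacle is precisely this step---controlling the decoder's output distribution so that the residual error retains the local stochastic form with an explicitly bounded parameter, since a naive union bound does not preserve the exponential-decay structure. Making the constants $K, c_0, C, C', p_{th}$ explicit in terms of the expansion parameters $(\gamma_A, \delta_A, \gamma_B, \delta_B)$ and the degrees requires the potential-function argument of \cite[Section~3.1]{fawzi2020constant}, which I would import rather than reprove, as the combinatorics governing the monotone decrease of the small-set-flip potential is the genuinely hard technical core.
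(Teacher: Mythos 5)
Your proposal follows the same route as the paper: apply Lemma~\ref{lemma:pr_plus_p} to identify the noise at the refresh input as local stochastic with parameter $p+p_r$, then invoke the small-set-flip decoding guarantee for quantum expander codes---the paper's proof is literally this reduction plus a citation of \cite[Theorem~1]{fawzi2018efficient} for the failure bound and \cite[proof of Theorem~3]{fawzi2020constant} for $p_r = K q^{1/c_0}$, which are exactly the results you import. Your extra percolation-style union-bound sketch just unpacks the internals of the cited theorem (and is slightly loose in collapsing the two-parameter $(p+p_r, q)$ noise into a single exponent $(p+p_r)^w$, a bookkeeping issue the references handle via the single-shot residual-noise analysis you correctly cite), but since the paper treats that theorem as a black box, the two proofs coincide in substance.
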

\begin{proof}
    From Lemma~\ref{lemma:pr_plus_p}, noise at the input of the refresh block is local stochastic with parameter at most $p+p_r$. If $p+p_r<p_{th}$, then from \cite[Theorem~1]{fawzi2018efficient} the logical probability of error is bounded above as \eqref{eqn:Pe_qexpander_qmemory}.
\end{proof}

Now, the following theorem provides a strictly positive lower bound on storage capacity for a quantum memory system protected using a quantum expander code. Specifically, consider a quantum memory system described in Section~\ref{subsec:q_memory_sys_model} protected using quantum expander code described in \cite{fawzi2018efficient}.

\begin{theorem}
    \label{thm:qexp_storg_cap_lb}
    Suppose the qubits and syndrome measurements experience local stochastic noise with parameter pair $(p, q)$. Let the residual noise after each cycle be local stochastic with parameter $p_r$. If $p+p_r < p_{th}$, then there exists a sequence of quantum memories with a strictly positive storage capacity $\mathfrak{Q}(\mc{N}_p)$. Specifically, the storage capacity of this memory system is bounded below as
    \begin{equation}
        \mathfrak{Q}(\mc{N}_p) \geq \frac{1}{\frac{3}{R} + (3 + d_A + d_B) \frac{2}{R_c} \lt(\frac{1}{R_c} - 1\rt)},
    \end{equation}
    where $R_c = \frac{n_A - n_B}{n_A} = 1-\frac{d_A}{d_B}$ and $R = \frac{(n_A - n_B)^2}{n_A^2 + n_B^2} = \frac{(d_B - d_A)^2}{d_B^2 + d_A^2}$ are the design rates of the classical expander code and the quantum expander code, respectively.
\end{theorem}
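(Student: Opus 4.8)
The plan is to prove the bound by \emph{exhibiting} a single explicit stable memory sequence whose overhead equals the denominator of the claimed expression, and then inverting. Definition~\ref{def:storage_capacity} makes $\mathfrak{Q}$ the reciprocal of the smallest overhead attainable by a stable sequence, so it suffices to build one family $\{M_k\}$ of quantum-expander-code memories, prove it is stable in the sense of Definition~\ref{def:asymp}, and compute its complexity-to-logical-qubit ratio $\theta^\ast$ exactly; this yields $\mathfrak{Q}(\mc{N}_p)\ge 1/\theta^\ast$. The work therefore splits into a stability argument and an exact resource count.

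For the construction I would take a family of $(d_A,d_B)$-biregular expanding Tanner graphs scaled up so that $n_B/n_A=d_A/d_B$ stays fixed; this biregularity identity $n_A d_A = n_B d_B$ is exactly what keeps $R_c=1-d_A/d_B$ and $R=(d_B-d_A)^2/(d_B^2+d_A^2)$ constant along the family. Taking the hypergraph product gives codes with $n=n_A^2+n_B^2$ data qubits and $k=(n_A-n_B)^2$ logical qubits. Conditions 1 and 2 of Definition~\ref{def:asymp} are then immediate: the CSS encoder supplies the initialization, and since $R=k/n$ is constant and the stabilizers have constant weight $d_A+d_B$, every resource count scales linearly in $n$, making the overhead $\chi(M_k)/k$ a constant $\theta^\ast$ independent of $k$. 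For Condition 3 I would fix $\epsilon,T$, set $t=\lceil T/\tau\rceil\tau\ge T$, and run the wait–refresh cycle $\lceil T/\tau\rceil$ times. The point is that $p+p_r<p_{th}$ keeps the post-refresh error local stochastic with the same bounded parameter each cycle (the single-shot renewal property behind the preceding lemma, with $p_r=Kq^{1/c_0}$), so the accumulated logical-error probability is at most $\lceil T/\tau\rceil\,P_e(n)$ with $P_e(n)$ bounded by \eqref{eqn:Pe_qexpander_qmemory}; since $\lceil T/\tau\rceil$ is constant in $n$ and this decays super-polynomially, I can choose $k$ (hence $n$) large enough to push it below $\epsilon$, and then Lemma~\ref{lemma:prob_err_fidelity} converts this into $\min_{\sigma_k}F(\mc{D}(\hat\rho_k),\sigma_k)\ge 1-\epsilon$.

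The crux is the explicit accounting. I would tally resources in two blocks. The data-qubit block contributes $n$ qubits plus at most one $X$- and one $Z$-correction per data qubit, i.e. $\le 2n$ correction gates, for a total of $3n$; dividing by $k$ gives the $3/R$ term. The check block has $2n_An_B$ ancillas (the $A\times B$ and $B\times A$ generators), each contributing one ancilla qubit, one preparation, one measurement, and $d_A+d_B$ entangling gates for syndrome extraction, i.e. $(3+d_A+d_B)\cdot 2n_An_B$ resources. The bridging identity is $2n_An_B/k=\frac{2}{R_c}\big(\frac{1}{R_c}-1\big)$, which follows from $1/R_c=n_A/(n_A-n_B)$ and $1/R_c-1=n_B/(n_A-n_B)$, so that $\frac{2}{R_c}\big(\frac{1}{R_c}-1\big)=2n_An_B/(n_A-n_B)^2=2n_An_B/k$. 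Summing the two blocks gives $\theta^\ast=\frac{3}{R}+(3+d_A+d_B)\frac{2}{R_c}\big(\frac{1}{R_c}-1\big)$ and hence the stated bound.

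I expect the main obstacle to be conceptual in two places rather than computational. First, stability across many cycles is \emph{not} a corollary of the single-round bound alone: one must justify that the refresh genuinely renews the state so that errors do not compound over the $\lceil T/\tau\rceil$ rounds, which rests on the residual-noise guarantee $p_r=Kq^{1/c_0}$ keeping $p+p_r<p_{th}$ cycle after cycle; I would lean on the cited single-shot analysis for this. Second, the exact constants in the count (the two "$3$"s, and whether preparation, measurement, and classical-decoding steps are each charged one unit per qubit) must be pinned down to match the stated coefficients, which requires committing to a specific gate-level model of the refresh block. That bookkeeping, not any hard inequality, is where the argument most needs care.
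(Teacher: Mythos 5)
Your proposal is correct and follows essentially the same route as the paper's proof: build the quantum-expander-code memory family, verify stability by combining the per-cycle local-stochastic renewal (Lemma~\ref{lemma:pr_plus_p}), the exponentially decaying bound \eqref{eqn:Pe_qexpander_qmemory}, a union bound over $\lceil T/\tau\rceil$ cycles, and Lemma~\ref{lemma:prob_err_fidelity}, then count $n$ data qubits, $2n_An_B$ ancillas, $2n_An_B$ single-qubit gates, $2n_An_B(d_A+d_B)$ entangling gates, $2n_An_B$ measurements, and $2n$ correction gates to get $\theta^\ast=\frac{3}{R}+(3+d_A+d_B)\frac{2}{R_c}\bigl(\frac{1}{R_c}-1\bigr)$. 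Your resource tally matches the paper's exactly (your "preparation" unit plays the role of the paper's Hadamard count), so the only difference is presentational bookkeeping, not substance.
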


\begin{proof}
    For the quantum expander code proposed in \cite{fawzi2018efficient}, note that $k = (n_A - n_B)^2$ is the number of logical qubits, $n = n_A^2 + n_B^2$ is number of encoded data qubits (does not include ancillas), and $n_a = n-k = 2 n_A n_B$ is the number of ancillas. Denote $n_H = 2 n_A n_B $ as the number of Hadamard gates required for bit-flip error measurement, $n_{synd} = 2 n_A n_B (d_A + d_B)$ as the number of C-Z and C-X gates operating on encoded data qubits for syndrome measurement with ancillas as control and encoded qubits as target, $n_m = 2 n_A n_B$ as the number of Z-basis measurements, and $n_{EC} = 2 (n_A^2 + n_B^2)$ as the number of classically controlled X and Z gates for error correction. Then the total number of gates for error correction is
    \begin{equation}        
    \begin{split}
        \chi_g &= n_H + n_{synd} + n_{EC} \\
        &= 2 n_A n_B + 2 n_A n_B (d_A + d_B) + 2 (n_A^2 + n_B^2).
    \end{split}
    \end{equation}
    The achievable quantum storage rate is the ratio of the number of logical qubits to the complexity of the memory system:
    \begin{equation}
    \begin{split}    
        R_s &= \frac{k}{\chi} = \frac{k}{n + n_a + \chi_g + n_m} \\
        &= \frac{1}{3 \frac{n_A^2 + n_B^2}{(n_A - n_B)^2} + 2 (3 + d_A + d_B) \frac{n_A n_B}{(n_A - n_B)^2}}.
    \end{split}
    \end{equation}
    Observing that the design rate of classical expander code is $R_c = \frac{k_c}{n_c} = \frac{n_A - n_B}{n_A}$ and the design rate of quantum expander code is $R = \frac{(n_A - n_B)^2}{n_A^2 + n_B^2}$, the achievable storage rate can be rewritten as
    \begin{equation}
        R_{s} = \frac{1}{\frac{3}{R} + (3 + d_A + d_B) \frac{2}{R_c} \lt(\frac{1}{R_c} - 1\rt)}. \label{eqn:storage_capacity_lb}
    \end{equation}
    Note that the logical error rate after $L = \lceil T/\tau \rceil$ rounds of wait-refresh cycles can be obtained by applying the union bound:
\begin{equation}
\begin{split}
    P^{(T)}_e(n) &\leq  \sum\limits_{l=1}^{L} P^{(l)}_e(n) \leq L P_e(n)\\
    & = C \left\lceil\frac{T}{\tau}\right\rceil n \lt( \frac{p + p_r}{p_{th}} \rt)^{C' \sqrt{n}},
\end{split}
\end{equation}
where the last expression is obtained by substituting for $P_e(n)$ in \eqref{eqn:Pe_qexpander_qmemory}. For any given $T>0$, if $p+p_r < p_{th}$, then $P_e^{(T)}(n)$ can be made arbitrarily small for large values of $n$ (i.e., $P_e^{(T)}(n) \leq \epsilon$ for any $\epsilon \in (0, 1]$). Consequently, from Lemma~\ref{lemma:prob_err_fidelity}, the fidelity $F(\rho, \hat{\rho}) \geq 1-\epsilon$ can be achieved. Therefore, the quantum storage capacity can be bounded as $\mathfrak{Q}(\mc{N}_p) \geq R_s$.
\end{proof}

\begin{remark}
    Note that the storage rate is a (strictly) positive constant, and does not vanish with increasing $n$.
\end{remark}

\begin{corollary}
    Consider a quantum memory system described in Theorem~\ref{thm:qexp_storg_cap_lb}, but the qubits experience i.i.d. depolarizing noise $\mc{N}_{\tilde{p}}$ with parameter $\tilde{p} = 3p/2$ and the refresh block is noiseless. If $p < p_{th}$, then the storage capacity of this memory system is bounded below as $\mathfrak{Q}(\mc{N}_{\tilde{p}}) \geq R_s$.
\end{corollary}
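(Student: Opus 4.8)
The plan is to reduce the corollary to Theorem~\ref{thm:qexp_storg_cap_lb} by showing that i.i.d.\ depolarizing noise $\mc{N}_{\tilde{p}}$ with $\tilde{p} = 3p/2$ induces a bit-flip error set that is local stochastic with parameter $p$, and that the noiseless refresh forces the syndrome parameter $q$ and the residual parameter $p_r$ to vanish. Once both facts are in place, the hypothesis $p < p_{th}$ becomes $p + p_r < p_{th}$, and the storage-rate bound $R_s$ transfers verbatim from the theorem.

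First I would decompose the channel into its Pauli branches: a fixed qubit is acted on by $I, X, Y, Z$ with probabilities $1-\tilde{p},\ \tilde{p}/3,\ \tilde{p}/3,\ \tilde{p}/3$. Because $Y = iXZ$ carries a bit-flip component, a qubit belongs to the $X$-type error set $E$ exactly when its sampled Pauli is $X$ or $Y$, so the marginal is $\mathbb{P}[v \in E] = 2\tilde{p}/3$. Substituting $\tilde{p} = 3p/2$ gives $\mathbb{P}[v \in E] = p$. Invoking the i.i.d.\ assumption of the depolarizing model, the events $\{v \in E\}$ are mutually independent, hence for every $S \subseteq V$,
\begin{equation}
    \mathbb{P}[S \subseteq E] = \prod_{v \in S} \frac{2\tilde{p}}{3} = p^{|S|},
\end{equation}
which meets the local stochastic bound $\mathbb{P}[S \subseteq E] \leq p^{|S|}$ with equality. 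By the $X$/$Z$ symmetry noted in the footnote, the identical computation governs the phase-flip errors, so it suffices to control one type.

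Since the refresh block is assumed noiseless, no syndrome error occurs, so $D = \emptyset$ almost surely and the joint condition $\mathbb{P}[S \subseteq E,\, T \subseteq D] \leq p^{|S|} q^{|T|}$ holds with $q = 0$; correspondingly the residual error vanishes, $p_r = K q^{1/c_0} = 0$. Thus the total error at the input of each refresh is local stochastic with parameters $(p, 0)$, and $p + p_r = p < p_{th}$ by hypothesis. Theorem~\ref{thm:qexp_storg_cap_lb} then yields $\mathfrak{Q}(\mc{N}_{\tilde{p}}) \geq R_s$.

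I anticipate the main subtlety to be justifying that the bit-flip and phase-flip analyses decouple even though the $Y$-error correlates the two on a single qubit. The resolution is structural: a CSS decoder corrects $X$- and $Z$-type errors from the two independent syndromes of $H_Z$ and $H_X$, so only the per-type marginal enters the local stochastic bound, and the correlation between the $X$- and $Z$-components on a single qubit never appears in the defining inequality. I would state this reliance on the symmetry footnote explicitly, so that the single identity $2\tilde{p}/3 = p$ suffices to close the argument.
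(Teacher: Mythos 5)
Your proof is correct, and it differs from the paper's in one substantive way. The paper's proof is short and citation-based: noiseless refresh gives $p_r = 0$, the theorem's machinery gives an arbitrarily small logical error under local stochastic noise with $p < p_{th}$, and then \cite[Lemma~29]{fawzi2018efficient} is invoked as a black box to transfer that logical-error bound to i.i.d.\ depolarizing noise with $\tilde{p} = 3p/2$. You instead prove the underlying noise-model reduction from first principles: splitting $\mc{N}_{\tilde{p}}$ into Pauli branches, a qubit lies in the $X$-error set $E$ exactly when it suffers $X$ or $Y$, so $\mathbb{P}[v \in E] = 2\tilde{p}/3 = p$, and independence across qubits gives $\mathbb{P}[S \subseteq E] = p^{|S|}$; hence i.i.d.\ depolarizing noise is literally an instance of the paper's local stochastic model with parameters $(p, 0)$, and Theorem~\ref{thm:qexp_storg_cap_lb} applies directly, with no transfer lemma needed. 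In effect you re-derive the relevant content of the cited lemma. Your route is more self-contained and makes two things explicit that the paper leaves implicit: where the factor $3/2$ comes from, and why the $X$--$Z$ correlation created by $Y$ errors is harmless (the paper's local stochastic condition constrains one error type at a time, and a CSS decoder acts on the two syndromes separately, so only per-type marginals enter the bound). The paper's route buys brevity and delegates to \cite{fawzi2018efficient} any subtleties of how that decoder's analysis treats the two error types jointly --- the one point your argument must, and does, handle by hand in its closing paragraph.
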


\begin{proof}
    Since the refresh block is noiseless, $p_r = 0$. Therefore, if the qubits experience a local stochastic noise with parameter $p$ with $p<p_{th}$, then the logical error can be made arbitrarily small for a sufficiently large $n$. Suppose the logical error rate is bounded by $\epsilon$, then from \cite[Lemma~29]{fawzi2018efficient} the logical error rate with i.i.d. depolarizing noise with parameter $\tilde{p} = 3p/2$ is also bounded by $\epsilon$. Therefore, the quantum storage capacity of the memory system with i.i.d. depolarizing noise is also bounded below by $R_s$ in \eqref{eqn:storage_capacity_lb}.
\end{proof}

\section{Upper bounds (Converse)}
\label{sec:converse}

In this section, we derive upper bounds on storage capacity in both asymptotic and non-asymptotic scenarios, where the latter considers fixed number of physical qubits.

\subsection{Asymptotic bounds}
\label{subsec:asymp_bounds}
The following theorem provides an asymptotic upper bound on storage capacity when a quantum memory system is corrupted by an i.i.d. noise across qubits (using an argument similar to \cite[proof of Lemma~1]{uthirakalyani2023converse}).
\begin{theorem}
    \label{thm:converse_asymp}
    Consider a quantum memory system where each qubit is corrupted independently by a qubit channel $\mc{N}$, and let $\mc{Q}(\mc{N})$ denote the corresponding quantum capacity. Then the  storage capacity is bounded as $\mathfrak{Q}(\mc{N}) \leq \mc{Q}(\mc{N})$.
\end{theorem}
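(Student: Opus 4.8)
The plan is to reduce reliable storage to reliable quantum communication over $\mc{N}$ and then invoke the (weak) converse for quantum capacity, following the spirit of \cite{uthirakalyani2023converse}. By Definition~\ref{def:storage_capacity} it suffices to show that $\mc{Q}(\mc{N})$ is itself an admissible bound, i.e.\ that every stable memory sequence $\{M_k\}$ satisfies $\theta \geq 1/\mc{Q}(\mc{N})$; since $\mathfrak{Q}(\mc{N})$ is the smallest admissible such number, this immediately yields $\mathfrak{Q}(\mc{N}) \leq \mc{Q}(\mc{N})$. Fix then a stable sequence with overhead $\theta$ and let $n_k$ denote the number of physical data qubits in $M_k$.

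First I would extract a quantum communication code from each $M_k$. Because the data qubits decohere independently through $\mc{N}$, during any single wait period of the refresh loop the $n_k$ data qubits undergo exactly $\mc{N}^{\otimes n_k}$. Treating one such wait period as ``the channel,'' I would absorb all evolution occurring before it (the ideal encoder $\mc{E}$ together with any earlier wait--refresh cycles) into an effective encoder $\tilde{\mc{E}}:\mc{H}\to\mc{H}'$, and all evolution after it (the subsequent noisy waits and refreshes followed by the ideal decoder $\mc{D}$) into an effective decoder $\tilde{\mc{D}}:\mc{H}'\to\mc{H}$. Definition~\ref{def:asymp} guarantees that for each $\epsilon>0$ there is some $M_k$ for which the recovered fidelity is at least $1-\epsilon$, which translates into
\begin{equation}
    \min_{\sigma}F\lt(\tilde{\mc{D}}\circ\mc{N}^{\otimes n_k}\circ\tilde{\mc{E}}(\sigma),\,\sigma\rt)\geq 1-\epsilon ,
\end{equation}
i.e.\ an $(n_k,k)$ quantum code over $\mc{N}$ of rate $k/n_k$ and fidelity at least $1-\epsilon$.

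The step I expect to be most delicate is verifying that $\tilde{\mc{E}}$ and $\tilde{\mc{D}}$ are legitimate CPTP maps of the required dimensions. The refresh block comprises fresh-ancilla preparation, entangling gates, measurements, and classically controlled corrections; each of these is CPTP once its classical record is used and the ancillas are discarded, and a composition of CPTP maps with the intervening wait channels is again CPTP. The key observation that makes the folding harmless is that the communication converse constrains only the \emph{existence} of a CPTP decoder, not its noiselessness or efficiency; hence absorbing the noisy refresh blocks and all later cycles into $\tilde{\mc{D}}$ is permitted, and the resulting code still consumes just the single block $\mc{N}^{\otimes n_k}$, so its rate is exactly $k/n_k$.

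Finally I would close the argument asymptotically. Driving $\epsilon\to 0$ produces a subsequence of codes whose infidelities vanish, so the weak converse for quantum capacity gives $\limsup_k k/n_k \leq \mc{Q}(\mc{N})$. Since the complexity $\chi(M_k)$ counts the physical qubits among its terms we have $n_k \leq \chi(M_k)$, and the stability bound $\chi(M_k)\leq \theta k$ gives $1/\theta \leq k/\chi(M_k)\leq k/n_k$ for every $k$. Combining these, $1/\theta \leq \mc{Q}(\mc{N})$, i.e.\ $\theta\geq 1/\mc{Q}(\mc{N})$. As this holds for an arbitrary stable sequence, $\mc{Q}(\mc{N})$ is admissible and therefore $\mathfrak{Q}(\mc{N})\leq \mc{Q}(\mc{N})$.
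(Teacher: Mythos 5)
Your proposal is correct and follows essentially the same route as the paper's proof: fold the noisy refresh blocks and all wait--refresh cycles other than a single wait period into an effective CPTP decoder (the paper uses the first wait period, with $\mc{D}' = \mc{D} \circ (\mc{R} \circ \mc{N}^{\otimes n})^{\circ L-1} \circ \mc{R}$), thereby reducing reliable storage to reliable communication over $\mc{N}^{\otimes n}$, and then invoke the quantum capacity converse. Your closing bookkeeping ($n_k \leq \chi(M_k) \leq \theta k$, hence $1/\theta \leq k/n_k \leq \mc{Q}(\mc{N})$) just makes explicit a step the paper leaves implicit.
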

\begin{proof}
    Consider a quantum memory system with quantum information stored across $n$ physical qubits, where $n$ is arbitrary. Let $\mc{N}^{\otimes n}$ and $\mc{R}$ indicate the CPTP maps corresponding to the wait duration and refresh, respectively. Then the decoded quantum state after $L$ cycles of wait and refresh is $\hat{\sigma} = \mc{D} \circ (\mc{R} \circ \mc{N}^{\otimes n})^{\circ L} \circ \mc{E}(\sigma)$. Note that $\hat{\sigma}$ can be expressed as $\mc{D}' \circ \mc{N}^{\otimes n} \circ \mc{E}(\sigma)$ for some $\mc{D}'$, in particular for $\mc{D}' = \mc{D} \circ (\mc{R} \circ \mc{N}^{\otimes n})^{\circ L-1} \circ \mc{R}$. This implies that there exists $\mc{D}'$ with $\sigma' = \mc{D}' \circ \mc{N}^{\otimes n} \circ \mc{E}(\sigma)$ such that $F(\sigma, \sigma') \geq F(\sigma, \hat{\sigma})$. Therefore, $F(\sigma, \sigma') \geq 1-\epsilon$ is a necessary condition for $F(\sigma, \hat{\sigma}) \geq 1-\epsilon$. In other words, the necessary condition for reliable storage of quantum state $\sigma$ in a memory system with each qubit corrupted by $\mc{N}$ is the ability to reliably communicate $\sigma$ through the independent channel $\mc{N}^{\otimes n}$. Storing quantum information with a rate greater than $\mc{Q}(\mc{N})$ would violate condition (3) of Definition~\ref{def:asymp}. Therefore, the storage capacity is bounded as $\mathfrak{Q}(\mc{N}) \leq \mc{Q}(\mc{N})$.
\end{proof}

The upper bound in Theorem~\ref{thm:converse_asymp} is loose because it takes noise in only one layer into account and the decoder complexity is not considered in deriving the bound. The next theorem (Theorem~\ref{thm:converse_asymp_dissip}) tightens this bound by incorporating a lower bound on the decoder complexity using an entropy dissipation argument similar to \cite{varshney2015toward}. Of separate interest, we also refine the upper bound on the storage capacity of classical memories of \cite{varshney2015toward} in Appendix \ref{apndx:classical_storage_capacity_ub}.

{\color{black} The effect of a noisy channel on data qubits can be equivalently described as a joint unitary evolution between the data qubits and the environment, followed by a discarding of the environment, which manifests as decoherence (entropy accretion) in the data qubits. The goal of the refresh map $\mc{R}$ is to project the quantum state to the {\color{black}codeword subspace}, disentangling the data qubits from the environment. One of the ways to achieve this is by using a set of orthogonal projectors as in stabilizer codes (measuring the ancillas in computational basis) followed by error detection and error correction, {\color{black}or alternatively one can consider a more general channel decoding approach of applying a coherent quantum instrument}, followed by a suitable isometry (as indicated in \cite[(24.28) and (24.50)]{wilde2019classical}) and re-encoding.  {\color{black}This general decoding method is depicted in Figure~\ref{fig:reliable_quantum_memory_blk_diagram} as a syndrome projective measurement, followed by classical post-processing and error correction.} To obtain a tighter converse bound on storage capacity in Theorem~\ref{thm:converse_asymp_dissip}, we consider the entropy accreted in the data qubits being dissipated using {\color{black}a collection of suitable projective measurements.} To this end, we start by defining complete entropy dissipation as follows.

\begin{definition}
    \label{def:complete_entropy_dissip}
    \textbf{Complete entropy dissipation:} The refresh map $\mc{R}$ dissipates entropy completely if and only if {\color{black}for a given $\varphi_{RM}$ and }for any $\delta \in (0, 1)$,\\
    \begin{equation}
        F(\mc{R}_{M} \circ \mc{N}^{\otimes n}_{M}(\varphi_{R M}), \varphi_{R M}) \geq 1-\delta.
        \label{eqn:complete_entropy_dissip}
    \end{equation}
    where $R$ is the purifying system or a reference system to which $M$ is entangled.\footnote{We omit the subscript $k$ for quantum memory $M_k$ to simplify notation.}
\end{definition}
In other words, $\mc{R}$ is a complete entropy dissipator if and only if the resulting state after one cycle of wait-and-refresh $\mc{R}_{M} \circ \mc{N}^{\otimes n}_{M}$ matches the initial state with high fidelity. This happens only when $\mc{R}_{M} \circ \mc{N}^{\otimes n}(\varphi_{R M})$ is (almost) completely disentangled from the environment. The following lemma provides a lower bound on the entropy required to be dissipated (entropy accretion) by $\mc{R}$ in every cycle.

\begin{lemma}
    \label{lemma:entropy_dissip_min}
    A complete entropy dissipator $\mc{R}$ associated with a quantum memory $M$ storing $k \leq I^c(\mc{N}^{\otimes n})$ logical qubits must dissipate an entropy of at least $H_k^{acc}(\mc{N}^{\otimes n})$ in every cycle, where $H_k^{acc} = \min_{\varphi_{RM}} H(RM)_{\mc{N}_M^{\otimes n}}(\varphi_{RM})$ such that $I^c(R \rangle M)_{\mc{N}_M^{\otimes n}(\varphi_{RM})} \geq k$.
\end{lemma}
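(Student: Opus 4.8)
The plan is to identify the entropy dissipated by $\mc{R}$ in one cycle with the entropy deposited into the environment of a Stinespring dilation of $\mc{R}_M$, and to lower bound it by an entropy-balance argument. First I would fix the purified input $\varphi_{RM}$ on which the dissipator acts and abbreviate $\sigma_{RM} := \mc{N}_M^{\otimes n}(\varphi_{RM})$ for the post-wait state. Introducing a fresh environment register $E$ prepared in a pure state $|0\rangle_E$ and a unitary dilation $U_{ME}$ of $\mc{R}_M$, the joint output $\tau_{RME}$ arises from the unitary evolution $I_R \otimes U_{ME}$ applied to $\sigma_{RM}\otimes|0\rangle\langle 0|_E$, so the total entropy is conserved, $H(RME)_\tau = H(RM)_\sigma$. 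Identifying the dissipated entropy with $H(E)_\tau$ and applying subadditivity $H(RME)_\tau \leq H(RM)_\tau + H(E)_\tau$ gives
\begin{equation}
    H(E)_\tau \geq H(RM)_{\mc{N}_M^{\otimes n}(\varphi_{RM})} - H(RM)_{\mc{R}_M\circ\mc{N}_M^{\otimes n}(\varphi_{RM})} .
\end{equation}

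Next I would invoke the complete-dissipation hypothesis of Definition~\ref{def:complete_entropy_dissip}: since $\mc{R}_M\circ\mc{N}_M^{\otimes n}(\varphi_{RM})$ is $\delta$-close in fidelity to the pure state $\varphi_{RM}$, a continuity bound for the von Neumann entropy (Fannes--Audenaert) forces $H(RM)_{\mc{R}_M\circ\mc{N}_M^{\otimes n}(\varphi_{RM})} \leq \eta(\delta)$ with $\eta(\delta)\to 0$ as $\delta\to 0$. Substituting, the dissipated entropy obeys $H(E)_\tau \geq H(RM)_{\mc{N}_M^{\otimes n}(\varphi_{RM})} - \eta(\delta)$; that is, up to a vanishing correction the refresh must discard at least the full entropy that the noise accreted onto the joint system $RM$.

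It then remains to certify that the particular $\varphi_{RM}$ is a feasible point of the optimization defining $H_k^{acc}$, i.e.\ that $I^c(R\rangle M)_{\mc{N}_M^{\otimes n}(\varphi_{RM})}\geq k$. Here I would use that storing $k$ logical qubits means $R$ is maximally entangled with the $2^k$-dimensional logical system, so the pure input satisfies $I^c(R\rangle M)_{\varphi} = k$. Since $\mc{R}_M$ acts only on $M$, data processing gives $I^c(R\rangle M)_{\mc{R}_M\circ\mc{N}(\varphi)} \leq I^c(R\rangle M)_{\mc{N}(\varphi)}$, while near-recovery of $\varphi$ together with continuity of coherent information (Alicki--Fannes) yields $I^c(R\rangle M)_{\mc{R}_M\circ\mc{N}(\varphi)} \geq k - \eta'(\delta)$. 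Chaining these, $I^c(R\rangle M)_{\mc{N}(\varphi)} \geq k - \eta'(\delta)$, so $\varphi$ is feasible for a $\delta$-relaxed constraint, whence $H(RM)_{\mc{N}(\varphi)} \geq H_{k-\eta'(\delta)}^{acc}$. Combining with the entropy-balance bound and letting $\delta\to 0$ (using continuity of $H_k^{acc}$ in its index) gives $H(E)_\tau \geq H_k^{acc}$, as claimed.

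The hard part will be twofold, and both pieces concern making the approximate ($\delta>0$) statement rigorous: first, pinning down a clean, dilation-independent notion of ``entropy dissipated in a cycle'' so that the subadditivity bound is the operationally correct quantity rather than an artifact of a chosen dilation; and second, propagating the fidelity guarantee through both the entropy continuity bound and the coherent-information continuity bound so that the relaxed feasibility constraint collapses back to $I^c(R\rangle M)\geq k$ in the limit. The entropy-conservation core is routine once the dilation picture is in place; the real care lies in the continuity/limiting estimates and in justifying that the stored-$k$-qubit requirement is precisely the coherent-information lower bound used to define $H_k^{acc}$.
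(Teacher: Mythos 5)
Your proof is correct, but it takes a genuinely different route from the paper's. The paper dilates the \emph{noise} channel $\mc{N}_M^{\otimes n}$: writing the purified post-wait state in Schmidt form across the cut $RM\!:\!E$, it exhibits $\mc{N}_M^{\otimes n}(\varphi_{RM})=\sum_i|\alpha_i|^2\ketbra{\varphi_i}_{RM}$ as a mixture of orthogonal pure states with entropy $H(\{|\alpha_i|^2\})=H(RM)_{\mc{N}_M^{\otimes n}(\varphi_{RM})}$, and then asserts directly that driving this mixture back to the pure target $\varphi_{RM}$ with fidelity approaching one forces $\mc{R}$ to dissipate at least that entropy; the feasibility constraint $I^c(R\rangle M)_{\mc{N}_M^{\otimes n}(\varphi_{RM})}\geq k$ is simply \emph{assumed} as part of the setup. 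You instead dilate the \emph{refresh} map, identify the dissipated entropy with the entropy $H(E)_\tau$ of the refresh environment, and prove the bound by entropy conservation under the dilation unitary plus subadditivity, with Fannes--Audenaert handling the approximate recovery; you then \emph{derive} feasibility from data processing and Alicki--Fannes continuity rather than positing it. Your version buys two things the paper leaves informal: an operational, dilation-independent definition of ``entropy dissipated per cycle'' (any two Stinespring dilations differ by an isometry on $E$, so $H(E)_\tau$ is well defined), and an actual inequality chain replacing the paper's qualitative claim that purifying a mixed state requires dumping its entropy. The paper's version buys brevity and makes the structure of the accreted entropy (a classical mixture of orthogonal states) explicit. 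Two small caveats on your side: the final limit leans on continuity of $k\mapsto H_k^{acc}$, which you assert but do not prove; note, however, that Definition~\ref{def:complete_entropy_dissip} requires the fidelity bound for \emph{every} $\delta\in(0,1)$, i.e.\ exact recovery, so all your $\eta(\delta)$, $\eta'(\delta)$ corrections can be set to zero and that continuity step bypassed entirely. Also, your claim $I^c(R\rangle M)_{\varphi}=k$ presumes $R$ is maximally entangled with the logical subspace; this matches the intended operational setting and, combined with your data-processing step, actually strengthens the lemma by justifying the constraint set defining $H_k^{acc}$.
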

\begin{proof}
{\color{black} Let $\mc{R}$ be a complete entropy dissipator such that \eqref{eqn:complete_entropy_dissip} holds for some $\varphi_{RM} = \ketbra{\varphi}_{RM}$ satisfying $I^c(R \rangle M)_{\mc{N}_M^{\otimes n}(\varphi_{RM})} \geq k$.} Suppose $V_{ME}$ is a Stinespring dilation of $\mc{N}_M^{\otimes n}$, then 
\begin{align}
    \mc{N}_M^{\otimes n}&(\varphi_{RM}) = \Tr_E{\lt(V_{ME} \ketbra{\varphi}_{RM} \otimes \ketbra{e_0}_E V^\dagger_{ME}\rt)} \\
    &= \Tr_E{\lt(\sum_{i=1}^r \alpha_i \ket{\varphi_i}_{RM} \ket{\tilde{e}_i}_E \sum_{j=1}^r \alpha^*_j \bra{\varphi_j}_{RM} \bra{\tilde{e}_j}_E\rt)} \\
    &= \sum_{i=1}^r |\alpha_i|^2 \ketbra{\varphi_i}_{RM}, \label{eqn:entropy_dissip_stinespring}
\end{align}
where $\{\ket{\varphi_i}_{RM}\}_i$ and $\{\ket{\tilde{e}_i}_E\}_i$ are orthogonal pure states with $\sum_{i=1}^r |\alpha_i|^2 = 1$ and Schmidt rank $r$. From Definition~\ref{def:complete_entropy_dissip}, $F\lt( \mc{R}_M (\sum_{i=1}^r |\alpha_i|^2 \ketbra{\varphi_i}_{RM}),~\ket{\varphi}_{RM} \rt) \geq 1-\delta$ holds. Note that $\sum_{i=1}^r |\alpha_i|^2 \ketbra{\varphi_i}_{RM}$ is a classical mixture of orthogonal pure states. Since $\ket{\varphi}_{RM}$ is a pure state, for the fidelity to be arbitrarily close to 1, $\mc{R}$ needs to dissipate the entropy of at least $H(RM)_{\mc{N}_M^{\otimes n}}(\varphi_{RM}) = H(|\alpha_1|^2, \ldots, |\alpha_r|^2) = -\sum_{i=1}^r |\alpha_i|^2 \log_2(|\alpha_i|^2)$. Minimizing $H(RM)_{\mc{N}_M^{\otimes n}(\varphi_{RM})}$ over $\varphi_{RM}$ gives the minimum entropy required to be dissipated by $\mc{R}$.
\end{proof}

In general, determining the lower bound on entropy accretion can be challenging. However for unitarily covariant channels, obtaining the entropy accretion bound is simpler, which we show in the following corollary to Lemma~\ref{lemma:entropy_dissip_min}.
{\color{black}
\begin{corollary}
    \label{cor:entropy_dissip_min_unitary_covariant}
    If $\mc{N}$ is a unitarily covariant channel, then the entropy accretion is bounded below as $H^{acc}(\mc{N}^{\otimes n}) \geq n - I^c(\mc{N}^{\otimes n})$.
\end{corollary}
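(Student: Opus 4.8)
The plan is to start from Lemma~\ref{lemma:entropy_dissip_min}, which already expresses the accreted entropy as the constrained minimization $H^{acc}_k(\mc{N}^{\otimes n}) = \min_{\varphi_{RM}} H(RM)$ subject to $I^c(R\rangle M) \geq k$, with both quantities evaluated on the output state $\mc{N}^{\otimes n}_M(\varphi_{RM})$. The first move is to rewrite the objective using the definition of coherent information, $I^c(R\rangle M) = H(M) - H(RM)$, so that $H(RM) = H(M) - I^c(R\rangle M)$. I would then specialize to the extremal case $k = I^c(\mc{N}^{\otimes n})$, the largest number of logical qubits the channel can support and the quantity appearing on the right-hand side, so that the constraint reads $I^c(R\rangle M) \geq I^c(\mc{N}^{\otimes n})$. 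Since the coherent information of any input is at most $I^c(\mc{N}^{\otimes n})$ by definition, every feasible input in fact saturates it, $I^c(R\rangle M) = I^c(\mc{N}^{\otimes n})$.

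Next I would exploit unitary covariance on two fronts. First, a unitarily covariant channel fixes the maximally mixed state, $\mc{N}(\pi) = \pi$, so the maximally entangled input $\Phi_{RM}$ (whose $M$-marginal is maximally mixed) produces a maximally mixed output on $M$; hence $H(M) = n$ for this input. Second, covariance implies that $\Phi_{RM}$ maximizes the coherent information, so it is feasible and attains $I^c(R\rangle M) = I^c(\mc{N}^{\otimes n})$. Evaluating the objective at $\Phi_{RM}$ therefore gives $H(RM) = H(M) - I^c(R\rangle M) = n - I^c(\mc{N}^{\otimes n})$, which is the target value (equivalently, one may read $H(RM)$ as the entropy pushed into the environment through the complementary channel, which is exactly what the refresh map must dissipate).

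It then remains to certify that this value is a \emph{lower} bound on the minimum, i.e., that no feasible input achieves a strictly smaller $H(RM)$. Writing $H(RM) = H(M) - I^c(\mc{N}^{\otimes n})$ for every feasible input (using saturation of the constraint), this is equivalent to showing $H(M) \geq n$, and since $H(M) \leq n$ always, to showing that every coherent-information-optimal input has a maximally mixed output. The plan here is a twirling argument: given any optimizer $\rho^\ast$, its orbit under the covariance group consists of optimizers (coherent information is invariant under the covariance action), the group average collapses to the maximally mixed state, and concavity of the von Neumann entropy forces $H(\mc{N}^{\otimes n}(\rho^\ast)) = n$. The main obstacle is exactly this last step: coherent information is not concave in the input in general, so one cannot symmetrize the objective directly; the argument must route the symmetrization through the (concave) output entropy and rely on the covariance group acting irreducibly enough that its Haar average sends any input to $\pi$. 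Handling this---and, more carefully, confirming that the maximally entangled input remains optimal for the tensor power $\mc{N}^{\otimes n}$ rather than only for $\mc{N}$---is where the real work lies; the remaining manipulations follow immediately from the coherent-information identity.
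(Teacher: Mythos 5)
Your plan stalls exactly where you say it does, and neither hole can be patched along the route you sketch. First, the twirling step points in the wrong direction. For an optimizer $\rho^*$ of $I^c(\mc{N}^{\otimes n})$, writing $\pi = \int U \rho^* U^\dagger\, dU$ and using covariance plus unitality ($\mc{N}^{\otimes n}(\pi)=\pi$), concavity of von Neumann entropy gives
\begin{equation}
    n = H\lt(\mc{N}^{\otimes n}(\pi)\rt) \geq \int H\lt(U\,\mc{N}^{\otimes n}(\rho^*)\,U^{\dagger}\rt)dU = H\lt(\mc{N}^{\otimes n}(\rho^*)\rt),
\end{equation}
which is the trivial inequality $H(M)\leq n$, not the needed $H(M)\geq n$. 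Upgrading this to an equality would require the Haar average $\pi$ to itself be an optimizer attaining the same output entropy, and that is exactly the concavity of coherent information that you correctly note is unavailable. Second, and more fatally, the auxiliary premise that the maximally entangled state maximizes $I^c$ of the tensor power is false for precisely the channel this corollary gets applied to later in the paper: for the qubit depolarizing channel in the superadditivity regime (DiVincenzo--Shor--Smolin), $I^c(\mc{N}^{\otimes n}) > n\, I^c(\mc{N}) = I^c(R\rangle M)_{\mc{N}^{\otimes n}_M(\Phi^+_{RM})}$, so $\Phi^+_{RM}$ is not even feasible for your constraint $I^c(R\rangle M)\geq I^c(\mc{N}^{\otimes n})$, and the better (repetition-code-like) inputs are maximally entangled with only a small subspace of $M$, with output entropy $H(M)$ far below $n$. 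Your route therefore cannot certify $H(M)\geq n$ on the feasible set, and no refinement of the symmetrization rescues it.

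The comparison with the paper is instructive, because the paper's proof is built to avoid exactly the question your plan runs into. It never invokes optimality of $\Phi^+$ and never characterizes the optimizer set: it asserts that unitary covariance makes $H(RM)_{\mc{N}^{\otimes n}_M(\varphi_{RM})}$ the \emph{same for every input} $\varphi_{RM}$, so the constrained minimum in Lemma~\ref{lemma:entropy_dissip_min} reduces to an evaluation at $\Phi^+$, where $H(M)=n$; the only other ingredient is $I^c(R\rangle M)\leq I^c(\mc{N}^{\otimes n})$, which holds by definition of channel coherent information, with no claim that it is saturated. (Whether that constancy assertion is itself fully justified is a separate matter---invariance under unitaries on $M$ only makes $H(RM)$ a function of the spectrum of the reduced input state---but it is how the paper sidesteps the optimizer-structure problem that sinks your argument.) Note also a scope mismatch: you treat only $k=I^c(\mc{N}^{\otimes n})$; since $H^{acc}_k$ is nondecreasing in $k$, that is the weakest instance of the claim, whereas the paper asserts the bound for every $k\leq I^c(\mc{N}^{\otimes n})$.
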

\begin{proof}
    {\color{black} The $n^{th}$ extension $\mc{N}^{\otimes n}$ of a unitarily covariant channel $\mc{N}$ is also unitarily covariant.} Then by definition we have $\mc{N}_M^{\otimes n}(\varphi_{RM}) = U_M^\dagger \mc{N}_M^{\otimes n}(U_M \varphi_{RM} U_M^\dagger) U_M$, whose dilation representation is $\mbox{Tr}_E\lt(V_{ME} \lt(\varphi_{RM} \otimes \ketbra{e}\rt) V_{ME}^\dagger \rt) = U_M^\dagger \mbox{Tr}_E\lt( V_{ME} (U_M \varphi_{RM} U_M^\dagger \otimes \ketbra{e}) V_{ME}^\dagger \rt) U_M$, which implies $H(RM)_{\mc{N}^{\otimes n}(\varphi_{RM})} = H(RM)_{\mc{N}^{\otimes n}(U_M \varphi_{RM} U_M^\dagger)}$ for any unitary $U$. Therefore, $H(RM)_{\mc{N}^{\otimes n}(\varphi_{RM})}$ is a constant for any $\varphi_{RM}$. Substituting $\varphi_{RM} = \Phi^+$ and maximizing the coherent information term with respect to $\varphi_{RM}$ we obtain the lower bound on entropy accretion as follows
    \begin{align}
        H_k^{acc}(\mc{N}^{\otimes n}) &= H(RM)_{\mc{N}_M^{\otimes n}(\varphi_{RM})} \\
        &= H(M)_{\mc{N}_M^{\otimes n}(\varphi_{RM})} - I^c(R \rangle M)_{\mc{N}_M^{\otimes n}(\varphi_{RM})} \\
        &\geq n - I^c(\mc{N}^{\otimes n}),
    \end{align}
    for any $k \leq I^c(\mc{N}^{\otimes n})$, where the last inequality is due to definition of channel coherent information $I^c(\mc{N}^{\otimes n}) = \underset{\varphi_{RM}}{\max}~I^c(R \rangle M)_{\mc{N}_M^{\otimes n}(\varphi_{RM})}$.
\end{proof}
}

Now, the following lemma provides a {\color{black}hypercontractivity} condition required to obtain a tighter upper bound on storage capacity in Theorem~\ref{thm:converse_asymp_dissip}.

\begin{lemma}
\label{lemma:strong_dpi}
Consider a unitarily covariant channel $\mc{N}$ satisfying the following conditions:
\begin{enumerate}[label=C\arabic*.]
    \item If $I_{n, \lambda}^c(\mc{R}) = I_{n, \lambda-1}^c(\mc{R})$, then $F(\rho_k^{(\lambda-1)}, \mc{R} \circ \mc{N}_M^{\otimes n} \circ \rho_k^{(\lambda-1)}) \geq 1-\delta$.
    \item For any CPTP map $\mc{R}$, only one of the following holds:
    \begin{enumerate}[label=\roman*., ref=\roman*]
        \item $I_{n, \lambda}^{c}(\mc{R}) = I_{n, \lambda-1}^{c}(\mc{R})$.
        \item $I_{n, \lambda}^{c}(\mc{R}) \leq \eta I_{n, \lambda-1}^{c}(\mc{R}) \text{ for some } \eta \in (0,1)$.
    \end{enumerate}        
\end{enumerate}
for all $\lambda \in \{1, \ldots, L\}$ and $\eta$ is constant with respect to $n$ and $L$. Here $I^{c}(\cdot)$ is the channel coherent information and $I^{c}_{n, \lambda}(\mc{R}) = I^{c}(\mc{N}^{\otimes n} \circ (\mc{R} \circ \mc{N}^{\otimes n})^{\circ \lambda})$. A quantum memory system with $n$ physical qubits can store quantum information reliably at a positive rate only if $\mc{R}$ dissipates the accreted entropy completely in every cycle.
\end{lemma}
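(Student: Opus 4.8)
The plan is to prove the contrapositive: I will show that any failure of $\mc{R}$ to dissipate the accreted entropy completely forces a multiplicative collapse of the channel coherent information, which is incompatible with maintaining a positive storage rate over the arbitrarily long horizons demanded by Definition~\ref{def:asymp}. The object I would track throughout is the sequence $f(\lambda) := I^c_{n,\lambda}(\mc{R}) = I^c(\mc{N}^{\otimes n} \circ (\mc{R}\circ\mc{N}^{\otimes n})^{\circ\lambda})$. First I would note that $f$ is non-increasing in $\lambda$: each additional cycle post-composes the effective channel with $\mc{R}\circ\mc{N}^{\otimes n}$, and coherent information cannot increase under post-composition with a CPTP map (data processing). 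I would also record the trivial ceiling $f(0) = I^c(\mc{N}^{\otimes n}) \leq n$.

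Next I would translate reliable positive-rate storage into a lower bound on $f(L)$. For the logical $k$-qubit state to be recoverable after $L$ cycles with fidelity at least $1-\epsilon$, the coherent information of the effective channel must be at least $k$ up to a continuity correction; invoking the converse to quantum channel coding together with an Alicki--Fannes--Winter-type continuity estimate gives $f(L) \geq k - o(k)$. Since a positive storage rate forces $k = \Theta(\chi) = \Theta(n)$, this yields $f(L) = \Omega(n)$ at the arbitrarily large times $L$ at which Definition~\ref{def:asymp} guarantees recovery.

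The heart of the argument then combines conditions C1 and C2. By C2, at each cycle exactly one of $f(\lambda) = f(\lambda-1)$ or $f(\lambda) \leq \eta\, f(\lambda-1)$ holds, with $\eta \in (0,1)$ fixed in $n$ and $L$. Taking the contrapositive of C1, whenever $\mc{R}$ fails to dissipate entropy completely in cycle $\lambda$---that is, $F(\rho_k^{(\lambda-1)}, \mc{R}\circ\mc{N}^{\otimes n}(\rho_k^{(\lambda-1)})) < 1-\delta$---the equality case is ruled out, so the multiplicative drop $f(\lambda)\leq \eta\, f(\lambda-1)$ must occur. Letting $m$ be the number of cycles up to $L$ in which complete dissipation fails and telescoping the drops gives $f(L) \leq \eta^{m} f(0) \leq \eta^{m} n$. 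Comparing with $f(L) = \Omega(n)$ forces $\eta^{m} = \Omega(1)$, i.e.\ $m = O(1)$, a bound independent of both $n$ and $L$.

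Finally I would promote this to the stated ``every cycle'' conclusion using stationarity. Since the same map $\mc{R}$ is applied each cycle (the template model of Section~\ref{subsec:q_memory_sys_model}), once $f(\lambda)=f(\lambda-1)$ holds, C1 gives $F(\rho_k^{(\lambda)},\rho_k^{(\lambda-1)}) \geq 1-\delta$; monotonicity of fidelity under $\mc{R}\circ\mc{N}^{\otimes n}$ together with continuity of coherent information then force the equality case to persist at every subsequent cycle, so the $O(1)$ failures can only sit in an initial transient and complete dissipation must hold throughout the steady state that governs unboundedly long storage. The main obstacle I anticipate is precisely this last upgrade: pinning down the continuity constants so that the $o(k)$ slack in the second step does not overwhelm the $\Theta(n)$ gap, and rigorously ruling out recurrent (rather than merely transient) failures when the same $\mc{R}$ is iterated for arbitrarily many cycles---this is exactly where C1, C2, and the uniformity of $\eta$ in $n$ and $L$ do the essential work.
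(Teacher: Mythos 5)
Your proof follows the same skeleton as the paper's: argue by contraposition, use the contrapositive of C1 to exclude the equality branch of C2, so that a cycle in which dissipation fails forces the geometric drop $I^c_{n,\lambda}(\mc{R}) \leq \eta\, I^c_{n,\lambda-1}(\mc{R})$, and then play this decay against the coherent-information lower bound that positive-rate storage demands (your use of the coding converse plus continuity to get $f(L) \geq k - o(k)$ makes explicit a step the paper leaves implicit). However, your per-cycle accounting leaves a genuine gap, one you flag yourself: telescoping over the $m$ failing cycles gives $f(L) \leq \eta^{m} n$, and comparing with $f(L) = \Omega(n)$ yields only $m = O(1)$, not $m = 0$, which is what ``dissipates completely in \emph{every} cycle'' asserts. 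The stationarity upgrade in your last paragraph does not close this gap: C1 is a one-directional implication (equality of successive coherent informations implies high fidelity), and neither C1 nor C2 propagates forward in time, so nothing rules out recurrent, widely separated, isolated drops --- a pattern fully consistent with both conditions and with your $m = O(1)$ bound, yet incompatible with the stated conclusion. The appeal to ``monotonicity of fidelity together with continuity of coherent information'' is not a proof step, and your own caveat correctly identifies it as the weak link.

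The missing idea is the one the paper uses to make the conclusion immediate: non-dissipation is read as a property of the fixed map $\mc{R}$, not of individual cycles. Since the identical refresh map is applied in every cycle, the failure of the fidelity condition of Definition~\ref{def:complete_entropy_dissip} recurs at every $\lambda$, so C2.ii applies recursively across all $L$ cycles, giving $I^c_{n,L}(\mc{R}) \leq \eta^{L-1} I^c(\mc{N}^{\otimes n})$, which vanishes as $L \to \infty$; because Definition~\ref{def:asymp} requires recovery at arbitrarily late times, this forces the storage rate to zero. Under this all-or-nothing, map-level dichotomy there is no intermediate regime of ``finitely many failures'' to rule out, whereas your finer per-cycle bookkeeping, while more careful, cannot reach the ``every cycle'' conclusion without an additional argument that you do not supply.
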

\begin{proof}    
    Suppose $\mc{R}$ does not disspipate entropy completely, i.e., $F(\rho_k^{(\lambda-1)}, \mc{R} \circ \mc{N}_M^{\otimes n} \circ \rho_k^{(\lambda-1)}) < 1-\delta$. Then from $(C1)$, we have $I_{n, \lambda}^c(\mc{R}) \neq I_{n, \lambda-1}^c(\mc{R})$, which along with data processing inequality of coherent information implies that condition $(C2.ii)$ holds. Applying condition $(C2.ii)$ recursively, the rate at which quantum information can be stored reliably for $L$ cycles is bounded above by $(\eta)^{L-1} I^{c}(\mc{N}^{\otimes n})$ which vanishes for large $L$. Therefore, the refresh map $\mc{R}$ being a complete entropy dissipator is a necessay condition for storing quantum information reliably forever (for arbitrary $L$) at a positive rate.
\end{proof}

{\color{black} Note that in each cycle the refresh block dissipates the accreted entropy by performing a syndrome measurement followed by classical post-processing and error correction as shown in Figure~\ref{fig:reliable_quantum_memory_blk_diagram}. A syndrome measurement entails performing a set of stabilizer measurements, which is in turn a collection of projective measurements on the set of ancillas entangled with the data qubits. One can interpret that each projective measurement collapses the system's state progressively into more and more definite states, gradually dissipating the entropy accumulated in the system. For example, in the quantum expander code in Section~\ref{sec:achievability}, the stabilizer measurements contain projective measurements that are the $X$ and $Z$-basis measurements performed on the ancilla qubits.

The following theorem provides an upper bound on quantum storage capacity assuming that we are equipped with multiple instances of a projective measurement. The idea is to take into account the number of projective measurements required to dissipate the entropy accumulated in each cycle towards the quantum circuit complexity. Note that we do not account for the gate complexity of a projective measurement in the converse bounds, i.e., it has a unit contribution towards the circuit complexity.}

\begin{theorem} 
\label{thm:converse_asymp_dissip}
Consider a quantum memory equipped with {\color{black}a set of projectors} $\mc{P} = \{\Pi_u\}_u$ satisfying $\sum_u \Pi_u = \mathbbm{1}$. Suppose each data qubit is corrupted independently by a unitarily covariant qubit channel $\mc{N}$ satisfying conditions $C1$ and $C2$ in Lemma~\ref{lemma:strong_dpi}, then the quantum storage capacity is bounded as
\begin{equation}
    \mathfrak{Q}(\mc{N}) \leq \frac{\mc{Q}(\mc{N}) H^{dis}(\mc{P})}{1 + H^{dis}(\mc{P}) - \mc{Q}(\mc{N})},
    \label{eqn:converse_asymp_dissip}
\end{equation}
where $H^{dis}(\mc{P}) = \sup_{\psi} \lt( H(\psi) - \sum_{u} p_{\mathbf{u}}(u) H(\psi_u) \rt)$, $H(\psi)$ is the von Neumann entropy of density operator $\psi$, $p_{\mathbf{u}}(u) = \Tr(\Pi_u \psi)$ is the probability that the measurement outcome is $u$ and $\psi_u = \frac{\Pi_u \psi \Pi_u}{p_{\mathbf{u}}(u)}$ is the post-measurement state corresponding to the projector $\Pi_u$.
\end{theorem}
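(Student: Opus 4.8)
The plan is to lower-bound the circuit complexity $\chi$ of any stable memory sequence by counting the projective measurements that the refresh map is forced to perform per cycle, and then to convert this complexity bound into an upper bound on the storage rate $R_s = k/\chi$, optimized over the logical-to-physical ratio $k/n$.

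First I would invoke Lemma~\ref{lemma:strong_dpi}: since $\mc{N}$ is unitarily covariant and satisfies conditions $C1$ and $C2$, any memory that stores $k$ logical qubits reliably at a positive rate for arbitrarily many cycles must employ a refresh map $\mc{R}$ that is a complete entropy dissipator. Lemma~\ref{lemma:entropy_dissip_min} together with Corollary~\ref{cor:entropy_dissip_min_unitary_covariant} then forces $\mc{R}$ to dissipate an entropy of at least $H^{acc}(\mc{N}^{\otimes n}) \geq n - I^c(\mc{N}^{\otimes n})$ in every cycle. This is the ``entropy budget'' that must be accounted for by the available projectors $\mc{P}$.

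Next I would charge this dissipation to the measurements. The only operations in the refresh block capable of lowering the entropy of the data register are the projective measurements in $\mc{P}$; the classical post-processing and the unitary (Pauli) corrections are entropy-preserving. A single measurement from $\mc{P}$ applied to any state $\psi$ removes at most $H^{dis}(\mc{P})$ of entropy by definition of the supremum. For a sequence of $m$ such measurements, the total entropy removed telescopes into a sum of per-step conditional dissipations, each again bounded by $H^{dis}(\mc{P})$ — here it is crucial that the supremum defining $H^{dis}(\mc{P})$ ranges over \emph{all} input states, so it also caps the dissipation on the conditional post-measurement states that arise adaptively. Hence $m\, H^{dis}(\mc{P}) \geq H^{acc}(\mc{N}^{\otimes n}) \geq n - I^c(\mc{N}^{\otimes n})$, giving at least $(n - I^c(\mc{N}^{\otimes n}))/H^{dis}(\mc{P})$ measurements per cycle. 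Counting each measurement as unit complexity and adding the $n$ data qubits yields
\[
\chi \geq n + \frac{n - I^c(\mc{N}^{\otimes n})}{H^{dis}(\mc{P})}.
\]

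Finally I would convert to a rate bound. Dividing by $n$, letting $n \to \infty$ so that $I^c(\mc{N}^{\otimes n})/n \to \mc{Q}(\mc{N})$, and using the communication converse $k/n \leq \mc{Q}(\mc{N})$ from Theorem~\ref{thm:converse_asymp}, write $r = k/n$ to obtain
\[
R_s = \frac{k}{\chi} \leq \frac{r\, H^{dis}(\mc{P})}{H^{dis}(\mc{P}) + 1 - \mc{Q}(\mc{N})}.
\]
This is increasing in $r \in [0, \mc{Q}(\mc{N})]$, so optimizing at $r = \mc{Q}(\mc{N})$ and taking the supremum over stable sequences ($\mathfrak{Q} = \sup R_s$) gives the claimed bound. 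The main obstacle I anticipate lies in the measurement-counting step: one must argue rigorously that (i) only the projective measurements, and not the unitary corrections or classical processing, can dissipate the accreted entropy, and (ii) the per-measurement dissipation against adaptively chosen conditional states remains capped by the single-shot supremum $H^{dis}(\mc{P})$, so that the telescoping inequality $m \geq H^{acc}(\mc{N}^{\otimes n})/H^{dis}(\mc{P})$ is genuinely valid rather than merely a single-measurement statement.
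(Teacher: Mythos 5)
Your proposal is correct and takes essentially the same route as the paper's proof: it invokes Lemma~\ref{lemma:strong_dpi} and Corollary~\ref{cor:entropy_dissip_min_unitary_covariant} to establish the per-cycle entropy budget $n - I^c(\mc{N}^{\otimes n})$, charges that budget to the projective measurements at $H^{dis}(\mc{P})$ apiece to lower-bound the circuit complexity, and converts this to the rate bound using super-additivity of coherent information ($I^c(\mc{N}^{\otimes n})/n \leq \mc{Q}(\mc{N})$) and monotonicity of the resulting expression. The differences are only presentational—you make explicit the adaptive measurement-counting step and the optimization over $r = k/n$, both of which the paper leaves implicit.
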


\begin{proof}
If the circuit complexity of an $n$-qubit memory system is $\mc{R}$ is $\chi$, then the storage rate is bounded as $\frac{I^c(\mc{N}^{\otimes n})}{n+\chi}$. To find a lower bound on $\chi$, observe (from Corollary~\ref{cor:entropy_dissip_min_unitary_covariant}) that at the end of the wait duration (just before the next round of refresh) the entropy build-up in the $n$-qubit quantum memory is at least $n-I^{c}(\mc{N}^{\otimes n})$. From Lemma~\ref{lemma:strong_dpi}, for quantum information to be reliably recovered after an arbitrary time duration (or arbitrary number of cycles $L$), the entropy built-up in every cycle must be completely dissipated. One way to dissipate the accreted entropy is through a set of projective measurements {\color{black}(for example, stabilizer measurements)}. The maximum entropy dissipated by the projective measurement $\mc{P}$ is given by $H^{dis}(\mc{P}) = \sup_{\psi} \lt( H(\psi) - \sum_{u} p_{\mathbf{u}}(u) H(\psi_u) \rt)$, where the first term and the second term are the entropies before and after measurement, respectively, $\psi_u$ is the post measurement state corresponding to the projector $\Pi_u$, and $p_{\mathbf{u}}(u) = \Tr(\Pi_u \psi)$ is the probability that the post measurement state is $\psi_u$. The supremum is over the set of all density operators $\psi$ (acting on Hilbert space of arbitrary dimension). Consequently, the number of measurements required to completely dissipate the accreted entropy from the data qubits is at least $\frac{n-I^{c}(\mc{N}^{\otimes n})}{H^{dis}(\mc{P})}$, which is a lower bound on the circuit complexity $\chi$. Therefore, the storage capacity is bounded as $\mathfrak{Q}(\mc{N}) \leq \frac{I^{c}(\mc{N}^{\otimes n})}{n+\chi} \leq \frac{I^{c}(\mc{N}^{\otimes n})}{n + \frac{n - I^{c}(\mc{N}^{\otimes n})}{H^{dis}(\mc{P})}} \leq \frac{\mc{Q}(\mc{N})}{1 + \frac{1 - \mc{Q}(\mc{N})}{H^{dis}(\mc{P})}}$, where the last inequality is due to the super-additivity of coherent information and $\mc{Q}(\mc{N}) = \lim_{n \rightarrow \infty} \frac{I^{c}(\mc{N}^{\otimes n})}{n}$.
\end{proof}

\begin{remark}
    Note that the bound in Theorem~\ref{thm:converse_asymp_dissip} is tighter than the bound in Theorem~\ref{thm:converse_asymp}. In general, for a quantum memory system containing qudits corrupted by a qudit channel $\mc{N}$ satisfying the conditions $C1$ and $C2$ in Lemma~\ref{lemma:strong_dpi} with an ensemble of measurements described in Theorem~\ref{thm:converse_asymp}, the upper bound on storage capacity is $\mathfrak{Q}(\mc{N}) \leq \frac{\mc{Q}(\mc{N}) H^{dis}(\mc{P})}{\log d + H^{dis}(\mc{P}) - \mc{Q}(\mc{N})}$, where $d$ is the qudit dimension.
\end{remark}

For a general projective measurement $\mc{P}$, obtaining closed-form expression of $H^{dis}(\mc{P})$ is not always possible. Therefore, we define noisy orthogonal projectors obeying certain symmetry property allowing us to obtain a closed-form expression of \eqref{eqn:converse_asymp_dissip} in Corollary~\ref{cor:converse_asymp_dissip_zeta}.

\begin{definition}
    \label{def:zeta_noisy_meas}
    The set of noisy orthogonal projectors with parameter $\zeta \in [0, \frac{1}{U}]$ is described by $\mc{P}_\zeta = \{\Pi_u : u \in \{1,\ldots,U\}\}$ satisfying $\sum_{u=1}^{U} \Pi_u = \mathbbm{1}$ such that the condition
    \begin{equation}
        \Tr(\Pi_u \psi) \geq \zeta.
        \label{eqn:zeta_noisy_condition}
    \end{equation}
    holds for all $u \in \{1,\ldots,U\}$ and for any density matrix $\psi$.
\end{definition}

\begin{corollary}
\label{cor:converse_asymp_dissip_zeta}
The storage capacity of a quantum memory system equipped with a set of $\zeta$-noisy orthogonal projectors $\mc{P}_\zeta$ satisfying condition \eqref{eqn:zeta_noisy_condition} is
\begin{equation}
    \mathfrak{Q}(\mc{N}) \leq \frac{\mc{Q}(\mc{N}) H^{dis}_U(\zeta)}{1 + H^{dis}_U(\zeta) - \mc{Q}(\mc{N})},
    \label{eqn:converse_asymp_dissip_zeta}
\end{equation}
where $H^{dis}_U(\zeta) = \log_2U - H(1+\zeta-U \zeta, \zeta, \ldots, \zeta)$ and $H(x_1, \ldots, x_U) = -\sum_{i=1}^{U} x_i \log_2(x_i)$ is the entropy function.
\end{corollary}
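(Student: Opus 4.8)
The plan is to deduce the corollary directly from Theorem~\ref{thm:converse_asymp_dissip}. That theorem already gives $\mathfrak{Q}(\mc{N}) \leq \frac{\mc{Q}(\mc{N}) H^{dis}(\mc{P})}{1 + H^{dis}(\mc{P}) - \mc{Q}(\mc{N})}$ for an arbitrary measurement $\mc{P}$, and the function $h \mapsto \frac{\mc{Q}(\mc{N}) h}{1+h-\mc{Q}(\mc{N})}$ is nondecreasing in $h$ on the relevant range (its derivative equals $\frac{\mc{Q}(\mc{N})(1-\mc{Q}(\mc{N}))}{(1+h-\mc{Q}(\mc{N}))^2} \geq 0$). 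Hence it suffices to show that the $\zeta$-noisy family of Definition~\ref{def:zeta_noisy_meas} satisfies $H^{dis}(\mc{P}_\zeta) \leq H^{dis}_U(\zeta) = \log_2 U - H(1+\zeta-U\zeta, \zeta, \ldots, \zeta)$, after which substitution yields \eqref{eqn:converse_asymp_dissip_zeta}. The whole problem thus reduces to evaluating the scalar $H^{dis}(\mc{P}_\zeta) = \sup_\psi \big( H(\psi) - \sum_u p_{\mathbf{u}}(u) H(\psi_u) \big)$.

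The first step is to translate the defining condition \eqref{eqn:zeta_noisy_condition}. Requiring $\Tr(\Pi_u \psi) \geq \zeta$ for every state $\psi$ is exactly the operator inequality $\Pi_u \succeq \zeta \mathbbm{1}$, so each element of $\mc{P}_\zeta$ has least eigenvalue at least $\zeta$; in particular $\sum_u \Pi_u = \mathbbm{1}$ together with $\Pi_u \succeq \zeta \mathbbm{1}$ forces $\zeta \leq 1/U$, consistent with the stated range. Two consequences drive the argument: the outcome probabilities always obey $p_{\mathbf{u}}(u) \geq \zeta$, and the noise floor $\Pi_u \succeq \zeta\mathbbm{1}$ prevents the post-measurement states $\psi_u$ from becoming too pure, so that the subtracted term $\sum_u p_{\mathbf{u}}(u) H(\psi_u)$ cannot be driven to zero.

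For achievability (the reverse inequality $H^{dis}(\mc{P}_\zeta) \geq H^{dis}_U(\zeta)$, which shows the bound is not improvable) I would take the canonical symmetric family $\Pi_u = \zeta\mathbbm{1} + (1-U\zeta)\tilde\Pi_u$ with $\{\tilde\Pi_u\}$ rank-one orthogonal projectors on a $U$-dimensional space, and feed in the maximally mixed state $\psi = \mathbbm{1}/U$. A direct computation then gives the uniform outcome distribution $p_{\mathbf{u}}(u) = 1/U$, so $H(\psi) = \log_2 U$, while each post-measurement state inherits the spectrum $(1+\zeta-U\zeta, \zeta, \ldots, \zeta)$, so that $\sum_u p_{\mathbf{u}}(u) H(\psi_u) = H(1+\zeta-U\zeta, \zeta, \ldots, \zeta)$; subtracting reproduces $H^{dis}_U(\zeta)$ exactly. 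The matching upper bound is where the real work lies: I would exploit the permutation symmetry of $\mc{P}_\zeta$ over outcomes together with the concavity of the von Neumann entropy to symmetrize any candidate $\psi$ without decreasing the objective, thereby reducing the supremum to the symmetric input already analyzed, and I would use the Schur-concavity of the Shannon entropy to identify $(1+\zeta-U\zeta, \zeta, \ldots, \zeta)$ as the entropy-minimizing distribution compatible with the floor $p_u \geq \zeta$.

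The main obstacle is precisely this upper-bound (converse) direction, because the supremum ranges over density operators $\psi$ of arbitrary dimension: one cannot simply bound $H(\psi) \leq \log_2 U$, since a high-rank $\psi$ inflates both $H(\psi)$ and $\sum_u p_{\mathbf{u}}(u) H(\psi_u)$ simultaneously. A naive per-branch bound $H(\psi_u) \geq H(1+\zeta-U\zeta,\zeta,\ldots,\zeta)$ fails (a pure input aligned with one outcome collapses to a pure branch), so the constraint must be used globally. The delicate point is to show that the joint growth of these two entropies always leaves a dissipation gap no larger than $H^{dis}_U(\zeta)$, which I expect to follow from combining the operator inequality $\Pi_u \succeq \zeta\mathbbm{1}$ with a majorization argument and the symmetrization above, but getting the $\log_2 U$ normalization to come out correctly is the step that needs care.
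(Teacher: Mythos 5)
Your proposal follows essentially the same route as the paper's proof: substitute $H^{dis}(\mc{P}_\zeta)$ into Theorem~\ref{thm:converse_asymp_dissip} (your explicit monotonicity check of $h \mapsto \frac{\mc{Q}h}{1+h-\mc{Q}}$ is a step the paper leaves implicit), then identify the extremal configuration as the symmetric family $\Pi_u = \zeta\mathbbm{1} + (1-U\zeta)\ketbra{\varphi_u}$ applied to the maximally mixed state of dimension $U$ --- exactly the configuration the paper exhibits. The step you flag as the ``real work,'' namely the converse $H^{dis}(\mc{P}_\zeta) \leq \log_2 U - H(1+\zeta-U\zeta,\zeta,\ldots,\zeta)$ with the supremum ranging over states and measurements of arbitrary dimension, is precisely the step the paper does not prove either: its proof derives the probability constraints $\zeta \leq \Tr(\Pi_u\psi) \leq 1-(U-1)\zeta$ and then asserts that ``due to the symmetry of the constraints, it can be observed that'' the symmetric configuration maximizes the dissipation. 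So your proposal is not less complete than the paper's argument, and it is sharper in two respects: you convert condition \eqref{eqn:zeta_noisy_condition} into the operator inequality $\Pi_u \succeq \zeta\mathbbm{1}$ (which is what any genuine maximization proof would need, and which forces $\zeta \leq 1/U$), and you correctly observe that a per-branch bound $H(\psi_u) \geq H(1+\zeta-U\zeta,\ldots,\zeta)$ fails because a pure input yields pure post-measurement branches, so the noise floor must be exploited globally --- a subtlety the paper never acknowledges. One wrinkle you share with the paper: both computations assign the post-measurement states the spectrum $(1+\zeta-U\zeta,\zeta,\ldots,\zeta)$, which is what one gets from Kraus operators $\sqrt{\Pi_u}$; under the rule $\psi_u = \Pi_u\psi\Pi_u/p_{\mathbf{u}}(u)$ actually stated in Theorem~\ref{thm:converse_asymp_dissip}, the non-idempotent $\Pi_u$ would instead produce the normalized squared spectrum, so strictly speaking neither your calculation nor the paper's matches the stated measurement rule.
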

\begin{proof}
    For any noisy projective measurement $\mc{P}_\zeta$ with $U$ elements, the completeness relation $\sum_{u = 1}^{U} \Pi_u = \mathbbm{1}$ holds, which implies $\Tr\lt(\sum_{u=1}^U \Pi_u \psi \rt) = \Tr(\psi) = 1$ for any density matrix $\psi$. Therefore,
    \begin{equation}
        \Tr(\Pi_{u} \psi) = 1 - \sum_{w \neq u}^{U} \Tr(\Pi_w \psi) \leq 1 - (U-1) \zeta,
        \label{eqn:zeta_noisy_condition_ub}
    \end{equation}
    for any $u \in \{1, \ldots, U\}$. Therefore, maximum entropy dissipation is $H^{dis}_U(\zeta) = \sup_{\{\mc{P}_\zeta\}, \psi} \lt( H(\psi) - \sum_{u = 1}^U p_{\mathbf{u}}(u) H(\psi_u) \rt)$, where the maximization is over the set of noisy projectors with parameter $\zeta$ and density matrix $\psi$. Due to the symmetry of the constraints \eqref{eqn:zeta_noisy_condition} and \eqref{eqn:zeta_noisy_condition_ub}, it can be observed that maximally mixed state of dimension $U$ with projectors $\Pi_u = (1+\zeta-U\zeta) \ketbra{\varphi_u} + \zeta \sum_{w \neq u}^{U} \ketbra{\varphi_w}$, where $\braket{\varphi_y}{\varphi_z}$ = 0 whenever $x \neq y$, maximize the entropy dissipation, yielding $H^{dis}_U(\zeta) = \log_2U - H(1+\zeta-U \zeta, \zeta, \ldots, \zeta)$.
\end{proof}

\begin{remark}
\label{cor:converse_asymp_dissip_binary}
The storage capacity of a quantum memory described in Corollary~\ref{cor:converse_asymp_dissip_zeta}, {\color{black} but restricted to} two-outcome $\zeta$-noisy projective measurements is bounded as 
\begin{equation}
\mathfrak{Q}(\mc{N}) \leq \frac{\mc{Q}(\mc{N}) (1-h_2(\zeta))}{2 - h_2(\zeta) - \mc{Q}(\mc{N})},
\label{eqn:converse_asymp_dissip_zeta_binary}
\end{equation}
which is obtained by substituting $U=2$ in \eqref{eqn:converse_asymp_dissip_zeta}.
\end{remark}

}
\begin{remark}
    When orthogonal projectors are noiseless, i.e., $\zeta = 0$ the upper bound on the ratio of storage capacity in \eqref{eqn:converse_asymp_dissip_zeta_binary} to quantum capacity $\frac{\mathfrak{Q}(\mc{N})}{Q(\mc{N})} \rightarrow \frac{1}{2}$ as $Q(\mc{N}) \rightarrow 0$, and $\frac{\mathfrak{Q}(\mc{N})}{Q(\mc{N})} \rightarrow 1$ as $Q(\mc{N}) \rightarrow 1$, which indicates that the circuit complexity of a quantum memory using extremely noisy qubits is at least twice as large as that of the noiseless quantum memory.
\end{remark}

\begin{corollary}       
    {\color{black}
    Consider a quantum memory system described in Corollary~\ref{cor:converse_asymp_dissip_zeta} with a set of two-outcome $\zeta$-noisy orthogonal projectors. Suppose each qubit is corrupted independently by depolarizing noise $\mc{N}_{\tilde{p}}$ with parameter $\tilde{p} \in [0, \frac{1}{4})$, then the storage capacity is bounded as     
    \begin{equation}
    \mathfrak{Q}(\mc{N}_{\tilde{p}}) \leq \frac{\mc{Q}_{ub}(\mc{N}_{\tilde{p}}) (1-h_2(\zeta)) }{2-h_2(\zeta) - \mc{Q}_{ub}(\mc{N}_{\tilde{p}})},
    \label{eqn:converse_asymp_depol}
    \end{equation}}
    where 
    \begin{equation}
    \resizebox{0.48\textwidth}{!}{$\mc{Q}_{ub}(\mc{N}_{\tilde{p}}) = \text{conv} \lt\{1-h_2({\tilde{p}}), h_2\lt(\frac{1+\gamma({\tilde{p}})}{2}\rt) - h_2\lt(\frac{\gamma({\tilde{p}})}{2}\rt), 1-4 {\tilde{p}} \rt\}$}, \nonumber
    \end{equation}
    conv$(\cdot)$ is the convex hull, and $h_2(\cdot)$ is the binary entropy function.
\end{corollary}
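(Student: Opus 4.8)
The plan is to reduce the statement to the two-outcome bound already established in Remark~\ref{cor:converse_asymp_dissip_binary}, namely \eqref{eqn:converse_asymp_dissip_zeta_binary}, and then feed in a known upper bound on the quantum capacity of the depolarizing channel. First I would record that the right-hand side of \eqref{eqn:converse_asymp_dissip_zeta_binary}, regarded as a function of the quantum capacity $\mc{Q}$, is strictly increasing: writing $a = 1-h_2(\zeta)$ the bound reads $g(\mc{Q}) = \frac{a\mc{Q}}{1+a-\mc{Q}}$, and since $g'(\mc{Q}) = \frac{a(1+a)}{(1+a-\mc{Q})^2} > 0$ on the range $\mc{Q} \leq 1 < 2-h_2(\zeta)$, the denominator stays positive and $g$ is monotone. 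Hence it suffices to exhibit any valid upper bound $\mc{Q}_{ub}(\mc{N}_{\tilde{p}}) \geq \mc{Q}(\mc{N}_{\tilde{p}})$; substituting it and invoking $g(\mc{Q}) \leq g(\mc{Q}_{ub})$ delivers \eqref{eqn:converse_asymp_depol} directly.

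The core of the argument is therefore to justify $\mc{Q}(\mc{N}_{\tilde{p}}) \leq \mc{Q}_{ub}(\mc{N}_{\tilde{p}})$ for the specific convex hull appearing in the statement, which I would do in two stages. In the first stage I argue that each of the three functions $1-h_2(\tilde{p})$, $h_2\lt(\frac{1+\gamma(\tilde{p})}{2}\rt) - h_2\lt(\frac{\gamma(\tilde{p})}{2}\rt)$, and $1-4\tilde{p}$ is individually a valid upper bound on $\mc{Q}(\mc{N}_{\tilde{p}})$ throughout $\tilde{p} \in [0,\frac{1}{4})$ --- respectively a data-processing/dephasing bound, an approximate-degradability bound, and the no-cloning (antidegradability) bound, the last of which correctly forces the capacity to vanish at the threshold $\tilde{p} = \frac{1}{4}$ --- citing the corresponding references and fixing the precise definition of $\gamma(\tilde{p})$. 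In the second stage I promote the pointwise minimum of the three to their full lower convex envelope. For this I would use that the depolarizing channel is affine in its parameter, so $\mc{N}_{\tilde{p}} = \lambda\,\mc{N}_{\tilde{p}_1} + (1-\lambda)\,\mc{N}_{\tilde{p}_2}$ whenever $\tilde{p} = \lambda \tilde{p}_1 + (1-\lambda)\tilde{p}_2$. Introducing a flagged extension $\tilde{\mc{N}}$ that discloses to the receiver which branch acted, one has $\mc{Q}(\mc{N}_{\tilde{p}}) \leq \mc{Q}(\tilde{\mc{N}})$ by data processing, since $\mc{N}_{\tilde{p}}$ is $\tilde{\mc{N}}$ with the flag discarded; combined with the flagged-capacity estimate $\mc{Q}(\tilde{\mc{N}}) \leq \lambda\,\mc{Q}(\mc{N}_{\tilde{p}_1}) + (1-\lambda)\,\mc{Q}(\mc{N}_{\tilde{p}_2})$ this yields convexity of $\mc{Q}(\mc{N}_{\tilde{p}})$ in $\tilde{p}$. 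Convexity then places $\mc{Q}(\mc{N}_{\tilde{p}})$ below the convex hull of any family of valid upper bounds, which is exactly $\mc{Q}_{ub}(\mc{N}_{\tilde{p}})$.

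Assembling the pieces, monotonicity of $g$ together with $\mc{Q}_{ub} \geq \mc{Q}(\mc{N}_{\tilde{p}})$ gives the claimed inequality \eqref{eqn:converse_asymp_depol}; the hypothesis $\tilde{p} \in [0,\frac{1}{4})$ ensures $\mc{Q}_{ub}(\mc{N}_{\tilde{p}}) > 0$ through the $1-4\tilde{p}$ term, so the bound is non-vacuous, and simultaneously keeps $2-h_2(\zeta)-\mc{Q}_{ub}$ strictly positive. I expect the main obstacle to be the convexity step, and specifically the flagged-capacity estimate $\mc{Q}(\tilde{\mc{N}}) \leq \lambda\,\mc{Q}(\mc{N}_{\tilde{p}_1}) + (1-\lambda)\,\mc{Q}(\mc{N}_{\tilde{p}_2})$: one must be careful that revealing the branch precludes any super-additive gain \emph{across} branches, so that the flagged capacity does not merely lower bound the convex combination. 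Pinning down $\gamma(\tilde{p})$ and verifying that the approximate-degradability function is a genuine upper bound over the entire interval is the most technical part, whereas the monotonicity reduction and the final substitution are routine.
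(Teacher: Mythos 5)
Your skeleton coincides with the paper's proof: reduce to the two-outcome $\zeta$-noisy bound \eqref{eqn:converse_asymp_dissip_zeta_binary}, then replace $\mc{Q}(\mc{N}_{\tilde{p}})$ by an upper bound $\mc{Q}_{ub}(\mc{N}_{\tilde{p}})$. Your explicit check that $g(\mc{Q})=\frac{a\mc{Q}}{1+a-\mc{Q}}$ with $a=1-h_2(\zeta)$ is increasing for $\mc{Q}\leq 1$ is correct and is a step the paper leaves implicit, so that part is a small improvement. The difference is in how $\mc{Q}_{ub}$ is justified: the paper invokes \cite[eq.~36]{sutter2017approximate} for the \emph{entire} convex-hull bound in one stroke, whereas you cite references only for the three individual curves and attempt to establish the convex-hull step yourself. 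That step is where your argument has a genuine gap.

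Concretely, your ``flagged-capacity estimate'' $\mc{Q}(\tilde{\mc{N}}) \leq \lambda\,\mc{Q}(\mc{N}_{\tilde{p}_1}) + (1-\lambda)\,\mc{Q}(\mc{N}_{\tilde{p}_2})$, from which you deduce convexity of $\mc{Q}(\mc{N}_{\tilde{p}})$ in $\tilde{p}$, is not a known fact for general constituent channels and does not follow from data processing or from the affinity of $\tilde{p}\mapsto\mc{N}_{\tilde{p}}$. For a single use of a flagged mixture the coherent information is indeed the convex combination of the branch coherent informations, but over $n$ uses the flag pattern produces cross-branch channels of the form $\mc{N}_{\tilde{p}_1}\otimes\mc{N}_{\tilde{p}_2}\otimes\cdots$, and code states entangled across branches can in principle exhibit exactly the kind of superadditive gain that superactivation (pairs with $\mc{Q}(\mc{M}_1)=\mc{Q}(\mc{M}_2)=0$ yet $\mc{Q}(\mc{M}_1\otimes\mc{M}_2)>0$) shows is a real phenomenon. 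Excluding this cross-branch gain is the entire content of the convex-hull technique, and the known proofs do so by choosing the decomposition so that each constituent is degradable or antidegradable --- e.g., $\mc{N}_{\tilde{p}} = (1-4\tilde{p})\,\mathrm{id} + 4\tilde{p}\,\mc{N}_{1/4}$ with $\mc{N}_{1/4}$ antidegradable, which yields the $1-4\tilde{p}$ facet --- so that the flagged mixture is itself (approximately) degradable and its capacity is single-letter. This is precisely what the cited result \cite{sutter2017approximate} packages. You correctly identify this as ``the main obstacle,'' but you leave it unresolved, and the generic flagging argument you outline cannot close it. The repair is simple and collapses your proof onto the paper's: cite \cite[eq.~36]{sutter2017approximate} for the convex-hull bound itself, after which your monotone-substitution step finishes the argument.
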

\begin{proof}
    From the proof of Theorem~\ref{thm:converse_asymp}, the reliable storage problem reduces to the problem of reliable communication. From \cite[eq.~36]{sutter2017approximate}, the quantum capacity of the depolarizing channel is bounded as
    \begin{equation}
     \resizebox{0.44\textwidth}{!}{$\mc{Q}(\mc{N}_{\tilde{p}}) \leq \text{conv}\lt\{ 1 - h({\tilde{p}}), h\lt(\frac{1+\gamma({\tilde{p}})}{2}\rt) - h\lt(\frac{\gamma({\tilde{p}})}{2}\rt), 1-4{\tilde{p}} \rt\}.$} \label{eqn:Q_UB_sutter}
    \end{equation}
    where $\gamma({\tilde{p}}) = 4 (\sqrt{1-\tilde{p}} - 1 + {\tilde{p}})$,
    Substituting \eqref{eqn:Q_UB_sutter} in \eqref{eqn:converse_asymp_dissip_zeta_binary}, we obtain the upper bound in \eqref{eqn:converse_asymp_depol}.
\end{proof}

Thus far, while computing storage capacities we have assumed the number of physical qubits to be arbitrarily large. In the following section, we investigate the limits of reliable information storage in quantum memory systems with a fixed number of physical qubits.

\subsection{Non-asymptotic bounds}

When the number of physical qubits $n$ is fixed, the notions of reliable forever and arbitrarily small logical error rates are not always meaningful. Therefore, instead of stability of memory sequences, we provide an alternative definition of reliability namely $(\epsilon, T)$-reliability of a quantum memory system with a fixed number of physical qubits $n$ as follows.

\begin{definition}
    \label{def:non_asymp}
    A quantum memory $M_{k,n}$ with a storage capability of $k$ qubits constructed using a quantum system containing $n$ physical qubits is said to be $(\epsilon, T)$-reliable if it satisfies the following:
    \begin{enumerate}
        \item At time $t = 0$, each memory $M_{k, n}$ is initialized with a state $\rho_k = \mc{E}(\sigma_k)$, where $\sigma_k \in \mc{B}(\mc{H})$, $\dim(\mc{H}) = 2^k$ and $\mc{E}:\mc{B}(\mc{H}) \rightarrow \mc{B}(\mc{H}')$ is a CPTP map with $\dim(\mathcal{H}') \leq n$, which acts as an encoder.
        \item For the given $\epsilon>0$ and $T>0$, there exists $t \geq T$ and a CPTP map (a decoder) $\mc{D}:\mc{B}(\mc{H}') \rightarrow \mc{B}(\mc{H})$ such that $\underset{\sigma_k}{\min}~F(\mc{D}(\hat{\rho}_k), \sigma_k) \geq 1 - \epsilon$, {\color{black}where $\hat{\rho}_k$ is the state of $M_k$ at time $t$ given that its initial state at time $t=0$ was $\rho_k$}.
    \end{enumerate}
\end{definition}

\begin{definition}
    Quantum storage capacity $\mathfrak{Q}^{(\epsilon, T)}_n$ of an $(\epsilon, T)$-reliable quantum memory system with $n$ physical qubits and circuit complexity $\chi(M_{k, n})$ is a number such that $\mathfrak{Q}^{(\epsilon, T)}_n \geq \frac{k}{\chi(M_{k, n})}$ for all $k$.
\end{definition}

Next, we obtain an upper bound on storage capacity when the qubits experience a general i.i.d. channel, and a tighter upper bound under i.i.d. depolarizing noise. In this subsection, we do not consider measurement noise (i.e., we set $\zeta=0$) since it does not provide significant insights in addition to those discussed in Section~\ref{subsec:asymp_bounds}.

\subsubsection{A converse bound using one-shot quantum capacity} 

From one-shot quantum capacity over a quantum channel \cite[Corollary~14.4]{khatri2020principles}, we obtain the following upper bound on the storage capacity of a quantum memory system affected by a general i.i.d. noise.

\begin{corollary}
    Suppose a quantum memory system containing $n$ physical qubits, with each qubit independently experiencing noise characterized by a qubit channel $\mc{N}$, and a set of two-outcome (noiseless) projective measurements. For a target fidelity of $1-\epsilon$, the storage capacity is bounded above as 
    {\color{black}
    \begin{equation}
        \mathfrak{Q}^{(\epsilon, T)}_n(\mc{N}) \leq \frac{1}{1-2 \epsilon} \lt(\mc{Q}(\mc{N})+\frac{h_2(\epsilon)}{n}\rt), 
        \label{eqn:converse_fbl_oneshot}
    \end{equation}
    }
\end{corollary}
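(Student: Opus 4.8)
The plan is to mirror the reduction-to-communication strategy of Theorem~\ref{thm:converse_asymp}, but to replace the asymptotic capacity statement with a one-shot converse so that the finite-blocklength correction $h_2(\epsilon)/n$ survives. First I would argue that an $(\epsilon, T)$-reliable memory storing $k$ logical qubits in $n$ physical qubits induces a one-shot quantum communication protocol over the channel $\mc{N}^{\otimes n}$ used a single time. Exactly as in the proof of Theorem~\ref{thm:converse_asymp}, the decoded state after $L$ wait-refresh cycles can be rewritten as $\mc{D}' \circ \mc{N}^{\otimes n} \circ \mc{E}(\sigma)$ for a decoder $\mc{D}'$ that absorbs every noise layer except the first, so the requirement $\min_{\sigma} F(\mc{D}(\hat\rho_k),\sigma) \geq 1-\epsilon$ forces the existence of an encoder-decoder pair transmitting $k$ qubits through $\mc{N}^{\otimes n}$ with fidelity at least $1-\epsilon$.

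Next I would invoke the one-shot quantum capacity converse of \cite[Corollary~14.4]{khatri2020principles}, applied to the single channel $\mc{M}=\mc{N}^{\otimes n}$. This yields a bound of the form
\begin{equation}
    k \leq \frac{1}{1-2\epsilon}\lt( I^c(\mc{N}^{\otimes n}) + h_2(\epsilon) \rt),
\end{equation}
so the number of reliably storable logical qubits is controlled by the channel coherent information of $\mc{N}^{\otimes n}$, up to the $(1-2\epsilon)^{-1}$ prefactor and the additive $h_2(\epsilon)$ slack characteristic of finite-blocklength converses.

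I would then pass from the coherent information of the $n$-fold channel to the regularized quantum capacity, reusing the step already employed in Theorem~\ref{thm:converse_asymp_dissip}: by super-additivity of coherent information, $\mc{Q}(\mc{N}) = \lim_{n\to\infty}\frac{1}{n}I^c(\mc{N}^{\otimes n}) = \sup_n \frac{1}{n}I^c(\mc{N}^{\otimes n})$, hence $I^c(\mc{N}^{\otimes n}) \leq n\,\mc{Q}(\mc{N})$, giving $k \leq \frac{1}{1-2\epsilon}\lt(n\,\mc{Q}(\mc{N}) + h_2(\epsilon)\rt)$. Finally, since the circuit complexity counts among other things the $n$ physical qubits, $\chi(M_{k,n}) \geq n$, so
\begin{equation}
    \mathfrak{Q}^{(\epsilon, T)}_n(\mc{N}) = \frac{k}{\chi(M_{k,n})} \leq \frac{k}{n} \leq \frac{1}{1-2\epsilon}\lt( \mc{Q}(\mc{N}) + \frac{h_2(\epsilon)}{n} \rt),
\end{equation}
which is the claimed bound, holding uniformly in $k$ and therefore for the capacity itself.

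The main obstacle I anticipate is the second step: casting the one-shot converse in exactly this clean coherent-information form. Corollary~14.4, like one-shot converses generally, is typically phrased through smoothed or hypothesis-testing quantities and an entanglement-fidelity criterion, so care is needed to (i) convert the minimum-fidelity storage requirement of Definition~\ref{def:non_asymp} into the entanglement-fidelity the bound assumes, and (ii) justify that the smoothed divergence collapses to $I^c(\mc{N}^{\otimes n})$ with precisely the factor $(1-2\epsilon)^{-1}$ and additive term $h_2(\epsilon)$, rather than a looser constant. Once this one-shot inequality is secured, the regularization and the $\chi \geq n$ complexity bound are routine.
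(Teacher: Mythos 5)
Your proposal is correct and follows essentially the same route as the paper's proof: the reduction to one-shot communication over $\mc{N}^{\otimes n}$ via the argument of Theorem~\ref{thm:converse_asymp}, the one-shot converse $(1-2\epsilon)\log_2 d_l \leq I^c(\mc{N}^{\otimes n}) + h_2(\epsilon)$ from \cite[Corollary~14.4]{khatri2020principles}, and the passage to $\mc{Q}(\mc{N})$ by super-additivity of coherent information. Your explicit use of $\chi(M_{k,n}) \geq n$ to convert the communication-rate bound into a storage-capacity bound is a small step the paper leaves implicit, but it is the same argument.
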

\begin{proof}    
    Using the argument in Theorem~\ref{thm:converse_asymp}, the reliable storage problem reduces to the reliable communication problem. Then, for a fixed number of physical qubits $n$, the storage rate is bounded by the one-shot capacity of a finite blocklength qubit channel as \cite{khatri2020principles}
    \begin{equation}
    (1-2 \epsilon) \log_2(d_l) \leq I^c(\mc{N}^{\otimes n}) + h_2(\epsilon),
\end{equation}
where $d_l$ is the dimension of the logical quantum state, and $I^c(\mc{N}^{\otimes n})$ is the coherent information of the channel $\mc{N}^{\otimes n}$. Since coherent information is super-additive, and noting that quantum capacity is the limit of regularized coherent information yields
\begin{align}
    (1-2 \epsilon) \frac{\log_2(d_l)}{n} &\leq \underset{k}{\sup} \frac{I^c(\mc{N}^{\otimes k})}{k} + \frac{h_2(\epsilon)}{n}, \\
    R & \leq \frac{1}{1 - 2 \epsilon} \lt( \mc{Q}(\mc{N}) + \frac{h_2(\epsilon)}{n} \rt), \label{eqn:converse_one_shot_qcap}
\end{align}
where the rate $R = \frac{\log_2 d_l}{n}$.
\end{proof}

\subsubsection{A tighter converse bound for depolarizing noise}
In the particular case of qubits in a quantum memory system experiencing i.i.d. depolarizing noise $\mathcal{N}_{\tilde{p}}$, we obtain a tighter bound using second-order terms in the finite blocklength quantum communication bound.

\begin{corollary}
    \label{cor:converse_non_asymp_depol}
    Suppose a quantum memory system with $n$ physical qubits experiences i.i.d. depolarizing noise $\mc{N}^{\otimes n}_{\tilde{p}}$ with parameter ${\tilde{p}} \in [0, \frac{1}{4})$, and contains a set of two-outcome projective measurements. For a target fidelity of at least $1-\epsilon$, the storage capacity is bounded above as 
    {\color{black}
    \begin{equation}
    \mathfrak{Q}^{(\epsilon, T)}_n(\mc{N}_{\tilde{p}}) \leq \mc{Q}_{so}(\tilde{p}, n, \epsilon),
    \label{eqn:converse_depol_so}
    \end{equation}
    }
    where $\mc{Q}_{so}(\tilde{p}, n, \epsilon) $ is the convex hull of the following terms:
    \begin{itemize}
        \item $\mc{Q}(\mc{Z}_{\tilde{p}}) + \sqrt{\frac{V(\mc{Z}_{\tilde{p}})}{n}} \Phi^{-1}(\epsilon) + \frac{\log n}{2 n} + O\lt(\frac{1}{n}\rt)$,
        \item $h_2\lt(\frac{1+\gamma({\tilde{p}})}{2}\rt) - h_2\lt(\frac{\gamma({\tilde{p}})}{2}\rt)$, and
        \item $1-4 \tilde{p}$,
    \end{itemize}
    and $\mc{Q}(\mc{Z}_{\tilde{p}})$ and $V(\mc{Z}_{\tilde{p}})$ are the quantum capacity and the channel dispersion (or information variance) of the qubit dephasing channel $\mc{Z}_{\tilde{p}}$ with parameter ${\tilde{p}}$, respectively.
\end{corollary}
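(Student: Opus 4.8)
The plan is to follow the asymptotic depolarizing corollary almost verbatim, but to replace the leading dephasing term of the Sutter-type bound \eqref{eqn:Q_UB_sutter} by its finite-blocklength (second-order) expansion. First I would invoke the reduction of Theorem~\ref{thm:converse_asymp}: reliable storage of $\log_2 d_l$ logical qubits in an $(\epsilon, T)$-reliable memory forces reliable quantum communication of the same rate through $\mc{N}_{\tilde{p}}^{\otimes n}$ at error $\epsilon$. Since we set $\zeta = 0$ here, no measurement-complexity overhead enters the converse, so the storage rate $R = \frac{\log_2 d_l}{n}$ is controlled directly by the one-shot quantum capacity of $\mc{N}_{\tilde{p}}^{\otimes n}$. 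Rather than the crude one-shot bound $(1-2\epsilon)\log_2 d_l \leq I^c(\mc{N}_{\tilde{p}}^{\otimes n}) + h_2(\epsilon)$ used in the previous corollary, I would carry the second-order terms of the finite-blocklength converse.

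The key structural observation is that the first term of the Sutter bound, $1 - h_2(\tilde{p})$, is exactly the quantum capacity $\mc{Q}(\mc{Z}_{\tilde{p}})$ of the qubit dephasing channel, which is \emph{degradable}. For degradable channels the second-order coding rate is known exactly, so I would apply the Gaussian expansion for the dephasing channel and replace $\mc{Q}(\mc{Z}_{\tilde{p}})$ by
$\mc{Q}(\mc{Z}_{\tilde{p}}) + \sqrt{V(\mc{Z}_{\tilde{p}})/n}\,\Phi^{-1}(\epsilon) + \frac{\log n}{2n} + O\lt(\frac{1}{n}\rt)$,
where $V(\mc{Z}_{\tilde{p}})$ is the channel dispersion. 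This produces the first entry of $\mc{Q}_{so}(\tilde{p}, n, \epsilon)$. The remaining two entries, $h_2\lt(\frac{1+\gamma(\tilde{p})}{2}\rt) - h_2\lt(\frac{\gamma(\tilde{p})}{2}\rt)$ and $1 - 4\tilde{p}$, I would carry over unchanged from \cite{sutter2017approximate}: the former is the approximate-degradability bound and the latter the no-cloning bound, and both are single-letter upper bounds on the regularized coherent information $I^c(\mc{N}_{\tilde{p}}^{\otimes n})/n$ that hold for every finite $n$, hence remain valid finite-blocklength converses. Combining the three as a lower convex envelope, exactly as in the asymptotic bound \eqref{eqn:Q_UB_sutter}, yields \eqref{eqn:converse_depol_so}.

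The main obstacle is justifying that the dephasing comparison, which in the asymptotic argument yields the first term $1 - h_2(\tilde{p})$, lifts to the finite-blocklength level so that the second-order dephasing expansion is a legitimate converse for the depolarizing channel at fixed $n$. I would handle this by exhibiting a blocklength-preserving data-processing (degrading) relation between $\mc{N}_{\tilde{p}}^{\otimes n}$ and $\mc{Z}_{\tilde{p}}^{\otimes n}$, so that any code achieving fidelity $1-\epsilon$ over the depolarizing channel induces a code of the same rate and blocklength achieving at least that fidelity over the dephasing channel; the one-shot converse for the degradable dephasing channel then bounds the depolarizing rate. The delicate point is that the comparison must be tight enough that the dispersion contribution $\sqrt{V(\mc{Z}_{\tilde{p}})/n}\,\Phi^{-1}(\epsilon)$ is not lost, which is precisely why the statement is phrased using the dephasing channel's dispersion $V(\mc{Z}_{\tilde{p}})$ rather than that of the depolarizing channel, and why this term dominates in the small-$\tilde{p}$ regime where the memory actually operates.
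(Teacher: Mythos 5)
Your overall route coincides with the paper's: reduce $(\epsilon,T)$-reliable storage to quantum communication over $\mc{N}_{\tilde{p}}^{\otimes n}$ via the argument of Theorem~\ref{thm:converse_asymp}, then upgrade the dephasing term $1-h_2(\tilde{p}) = \mc{Q}(\mc{Z}_{\tilde{p}})$ in \eqref{eqn:Q_UB_sutter} to its second-order expansion while carrying the approximate-degradability and no-cloning terms over unchanged, and take the convex hull. The paper does exactly this, but it treats the crucial inequality---that the finite-blocklength rate of the depolarizing channel is bounded by the second-order expansion of the equal-parameter dephasing channel---as a black box, citing \cite[eq.~12]{tomamichel2016quantum}.

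The genuine gap is in the mechanism you propose for that crucial step. You plan to exhibit a degrading relation, i.e.\ a channel $\Lambda$ with $\mc{N}_{\tilde{p}} = \Lambda \circ \mc{Z}_{\tilde{p}}$ (tensorized over the block). No such $\Lambda$ exists, even for $n=1$. Both channels are Pauli channels and hence commute with Pauli conjugation, so if such a $\Lambda$ existed, its Pauli twirl $\Lambda^{tw}(\cdot)=\frac{1}{4}\sum_P P^\dagger \Lambda(P \cdot P^\dagger)P$ would satisfy the same identity; but a Pauli map acts multiplicatively on Pauli transfer eigenvalues, and matching $\mc{Z}_{\tilde{p}} = \mathrm{diag}\lt(1,\,1-2\tilde{p},\,1-2\tilde{p},\,1\rt)$ to $\mc{N}_{\tilde{p}} = \mathrm{diag}\lt(1,\,1-\tfrac{4\tilde{p}}{3},\,1-\tfrac{4\tilde{p}}{3},\,1-\tfrac{4\tilde{p}}{3}\rt)$ would force $\lambda_X = \lt(1-\tfrac{4\tilde{p}}{3}\rt)/\lt(1-2\tilde{p}\rt) > 1$ for $\tilde{p}\in(0,\tfrac14)$, impossible for a CPTP map (the same twirl rules out pre-processing). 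The depolarizing and dephasing channels with equal parameter are simply not ordered by degradation. The argument that actually underlies \cite[eq.~12]{tomamichel2016quantum} is a mixing argument: $\mc{N}_{\tilde{p}}(\rho) = \frac{1}{3}\sum_{j=0}^{2} W^j \mc{Z}_{\tilde{p}}(W^{-j}\rho W^{j}) W^{-j}$, where $W$ is the Clifford cycling $Z \to X \to Y \to Z$ under conjugation; since entanglement fidelity is affine in the channel, any code for $\mc{N}_{\tilde{p}}^{\otimes n}$ performs at least as well on some fixed pattern of rotated dephasing channels, and those fixed local unitaries can be absorbed into the encoder and decoder, yielding a code of the same size and blocklength for $\mc{Z}_{\tilde{p}}^{\otimes n}$. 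So the statement you want is true, but your degradation route to it would fail; either run the mixing argument (with the standard conversion between worst-case and entanglement fidelity) or cite the result as the paper does. A smaller caveat---that the two remaining asymptotic terms are carried over as finite-$n$ converses without any $\epsilon$-dependent slack, even though a bound on $I^c(\mc{N}^{\otimes n})/n$ does not by itself bound the $(n,\epsilon)$-rate---is shared with the paper's own proof, so it is not a gap relative to it.
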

\begin{proof}    
    The achievable rate of quantum communication of the depolarizing channel $\mc{N}_{\tilde{p}}$ is \cite[eq.~12]{tomamichel2016quantum}, \cite{tomamichel2013hierarchy} is bounded above by the quantum capacity of the dephasing channel $\mc{Z}_{\tilde{p}}$ with the same parameter ${\tilde{p}}$:
    \begin{align}
    \hspace{-6mm} \resizebox{0.44\textwidth}{!}{$\frac{\log_2(d)}{n}  \leq  \mc{Q}(\mc{Z}_{\tilde{p}}) + \sqrt{\frac{V(\mc{Z}_{\tilde{p}})}{n}} \Phi^{-1}(\epsilon) + \frac{\log n}{2 n} + O\lt(\frac{1}{n}\rt)$}. \label{eqn:depol_ub_dephase}
    \end{align}
    Replacing the first term in the convex hull in \eqref{eqn:Q_UB_sutter} with the above upper bound on the quantum capacity of dephasing channel, we obtain the bound in \eqref{eqn:converse_depol_so}.
\end{proof}

\begin{remark}
    It can be proved that the constant associated with the last term $O(\frac{1}{n})$ in \eqref{eqn:depol_ub_dephase} is negative (see Appendix~\ref{apndx:negative_constant}. for details), Therefore, the sum of first three terms in \eqref{eqn:depol_ub_dephase} is a valid upper bound, which we use in our numerical analyses in Sections~\ref{sec:time_complex_classical} and \ref{sec:compare_lb_ub_non_asymp}.
\end{remark}

In the next section, we investigate how time complexity of classical computation time of the decoder affects the upper bounds on storage capacity.

\section{Considering the time-complexity of classical computation in the refresh block}
\label{sec:time_complex_classical}

Thus far, we have assumed that the time complexity of the classical computation in error detection and correction (in the refresh block) is constant, which is a usual assumption in fault-tolerant quantum computation \cite{gottesman2013fault}. However, in practice, it is unlikely that this assumption holds, i.e., the time complexity scales non-trivially with the number of physical qubits $n$. For example, in \cite{fawzi2020constant}, the classical decoding algorithm (small set-flip algorithm) runs with a time complexity of $O(n)$, and its parallelized version scales as $O(\log n)$.

A side effect of considering the time complexity of classical computation is the inability to achieve an arbitrarily small probability of logical error for a given rate. In other words, the improvement in the logical error provided by large blocklengths is overshadowed by the accumulation of more noise while waiting for the completion of classical computation in each cycle. Consequently, our assumption of $p(\tau)$ being a constant no longer holds, and the time dependence of physical noise parameter $p$ plays an unfavorable, yet important role on the storage capacity. Note that syndrome error $q$ does not depend on time since ancillas are freshly prepared or reset before every syndrome measurement. We consider the model of time dependence of the physical noise parameter $p$ proposed in \cite{etxezarreta2021time}, which is $p(\tau) = \frac{1}{2}-\frac{1}{6}e^{-\tau/\tau_r} - \frac{1}{3}e^{-\tau/\tau_d}$, where $\tau_r$ and $\tau_d$ are the relaxation and dephasing times, respectively. In general, determining storage capacity $\mathfrak{Q}^{(\epsilon, T)}$ under the decoding time constraint can be formulated as the following optimization problem:

\begin{align}
\underset{n, \tau}{\mbox{maximize }} & R^{(\epsilon, T)}(n, \tau) \label{eqn:optimiz_gen} \\
\mbox{s.t. } & c_1 g(n) \tau_c + \tau_0 \leq \tau, \\
& p(\tau) + p_r \leq p_{th}, \\
& P^{(T)}_e(n, p(\tau), p_r, p_{th}) \leq \epsilon.
\label{eqn:optimiz_t}
\end{align}
where $R^{(\epsilon, T)}(n, \tau)$ is an achievable storage rate of a quantum memory containing $n$ physical qubits, $\Omega(g(n))$ is the time complexity of classical computation, $\tau_c$ is the time required for each cycle of classical computation in the decoder, $\tau_0$ is the time required for reading out syndrome measurements, $P^{(T)}_e(n, p(\tau), p_r, p_{th})$ is the logical error probability at time $T$, and $c_1>0$ is a constant. In general, optimization problem \eqref{eqn:optimiz_gen} may not be tractable since precise characterizations of $R^{(\epsilon, T)}(n, \tau)$ and $P^{(T)}_e(n, p(\tau), p_r, p_{th})$ may not be known. However, we can find upper bounds on storage capacities in specific cases; for example, under i.i.d. depolarizing noise, and $g(n) = \log n$, solving the following simpler optimization problem yields an upper bound on storage capacity (see Appendix \ref{apndx:optimiz_depol} for the derivation):
{\color{black}
\begin{align}
\underset{n, \tau}{\mbox{maximize }} & \mc{Q}_{so}(\tilde{p}(\tau), n, \epsilon) \label{eqn:optimiz_depol} \\
\mbox{s.t. } & c_1 g(n) \tau_c + \tau_{0} \leq \tau \leq \tau_{\max},
\end{align}
}
where $\mc{Q}_{so}(\tilde{p}(\tau), n, \epsilon)$ is the finite blocklength upper bound on quantum capacity of the depolarizing channel with parameter $\tilde{p}(\tau) = 3 p(\tau)/2$, and $\tau_{\max} = \max\{\tau : \tilde{p}(\tau) = \frac{1}{4}\}$. As shown in Appendix~\ref{apndx:optimiz_depol}, problem \eqref{eqn:optimiz_depol} can be further reduced to a one-dimensional non-convex optimization problem in $n$ which can be solved optimally using \cite[Algorithm~3]{uthirakalyani2023limits}. Figure~\ref{fig:contour_plot_Qub_depol_n_vs_tau} shows the contours of the objective function, the constraint curve, and the optimal solution $(n^*, \tau^*)$ of optimization problem \eqref{eqn:optimiz_depol} in a geometric manner.

\begin{figure}    
    \includegraphics[scale=0.62]{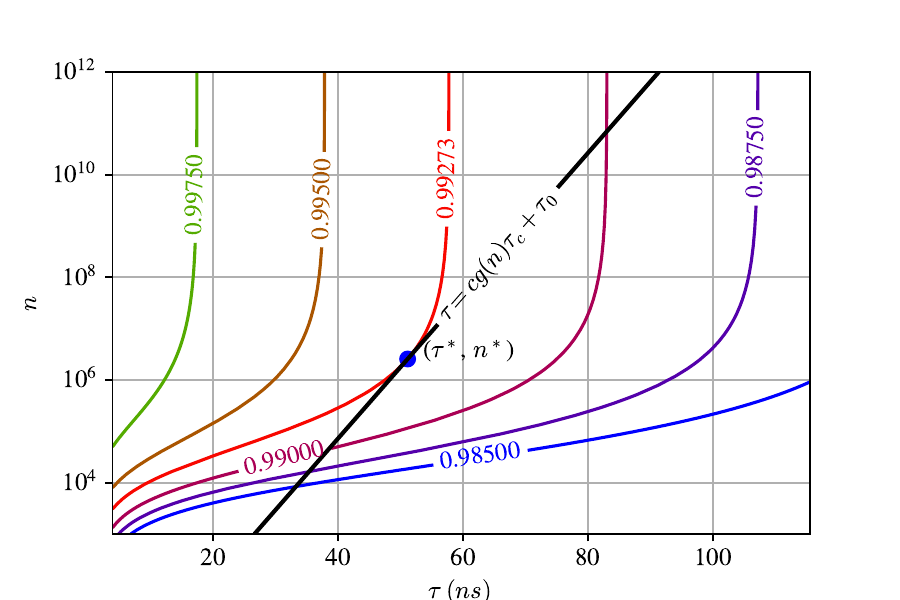}
    \caption{Geometric illustration of the optimization problem in \eqref{eqn:optimiz_depol}. The values of the constants are set as follows: $\tau_r = 49\,\mu$s, $\tau_d = 95 \mu$s \cite{burnett2019decoherence}, $1/\tau_c =$ 3.2 GHz, and $c_1 = 10$. The optimal point $(n^*, \tau^*) \approx (2.565 \times 10^6, 51.12~\text{ns})$ highlighted in the figure corresponds to the upper bound on storage capacity {\color{black} $\mathfrak{Q}^{(\epsilon, T)} \leq 0.9927320704447439$} for a target fidelity $\epsilon = 10^{-6}$.}
    \label{fig:contour_plot_Qub_depol_n_vs_tau}
\end{figure}

\section{Numerical Comparison between Upper and Lower Bounds on Storage Capacity}

\label{sec:compare_lb_ub}

In this section, we compare the upper and lower bounds of storage capacities in both asymptotic and non-asymptotic cases, when the quantum memory system is corrupted by i.i.d. depolarizing noise across the qubits for a single time-step, i.e., $L=1$ (extension to a general $L$ is straightforward). {\color{black}For comparison, we assume that there is no error in the refresh block (syndrome measurement and error correction) implying that the residual error $p_r \approx 0$, and also the measurement noise parameter $\zeta = 0$.}

\subsection{Asymptotic bounds}
\label{sec:compare_lb_ub_asymp}

Recall from \cite[Lemma~29]{fawzi2018efficient} that the lower bound on storage capacity in Theorem~\ref{thm:converse_asymp} (with local stochastic noise parameter $p$) also holds for depolarizing noise with parameter $\tilde{p} = 3p/2$. Consider a quantum expander code described in Section~\ref{subsec:q_expander_codes} with parameters given in Appendix \ref{apndx:qexpander_codes_params}, and let the depolarizing noise parameter be $\tilde{p} = 2.212 \times 10^{-19} \approx 3 p_{th}/2$. From \eqref{eqn:storage_capacity_lb}, the storage capacity is bounded below as $\mathfrak{Q} \geq R_s \approx 0.0004246$. On the other hand from \eqref{eqn:converse_asymp_depol}, the upper bound on storage capacity is $\mathfrak{Q} \leq 1 - 4.17 \times 10^{-17} \approx 1$ for the same physical noise level.

While quantum expander codes provide a non-vanishing lower bound on storage capacity in theory, the gap between achievable storage rate and the converse is too large to be of practical interest. This gap is a consequence of the small fault-tolerance threshold of quantum expander codes obtained using loose percolation theory bounds \cite{fawzi2018efficient}. Therefore, in the next section, we investigate the gap in the non-asymptotic regime, which is of more practical interest.

\subsection{Non-asymptotic bounds}
\label{sec:compare_lb_ub_non_asymp}
Here we consider a quantum memory system with $k=12$ (logical) qubits protected using a bivariate bicycle quantum LDPC code proposed in \cite{bravyi2024high}. In particular, we consider an $[[n,k,d_c]] = [[144,12,12]]$ code, where $n$ is the number of physical qubits and $d_c$ is the distance of the code. The circuit complexity of the resulting quantum memory system is $\chi = 12 n$ (see Appendix~\ref{apndx:qcyc_qldpc_codes_params} for the calculation). Therefore, the lower bound on storage capacity is $\mathfrak{Q}_n^{(\epsilon, T)} \geq \frac{k}{\chi} = 0.006944$, where the logical probability of error $\epsilon = 2.3639 \times 10^{-7}$.

From \eqref{eqn:depol_ub_dephase}, the upper bound on storage rate for a quantum memory system with $n=144$ physical qubits, and the target logical error rate $\epsilon = 2.3639 \times 10^{-7}$ is {\color{black} $\mathfrak{Q}_n^{(\epsilon, T)} \leq 0.8813$}.

\section{Conclusion}
\label{sec:conclusion}

In this paper, we extend the definition of reliable memories to the quantum setting. We prove that the achievable storage rate of a quantum memory system protected using quantum expander codes is a strictly positive constant when both data qubits and error correction are affected by local stochastic noise. Then using an entropy homeostasis argument similar to \cite{varshney2015toward} we provide a tighter upper bound on the storage capacity of a quantum memory system corrupted by noise, which is independent and identically distributed across qubits. We also provide a strictly positive lower bound on the storage capacity using bivariate bicycle LDPC codes proposed in \cite{bravyi2024high} and converse upper bounds in the non-asymptotic regime.

Finally, we compare the achievable storage rates with the upper bounds for specific values of depolarizing noise. In the asymptotic case, even though the lower bound does not vanish asymptotically, there is a large gap between the bounds. On the other hand, in the non-asymptotic regime, we compare the upper bound with the quantum storage rate obtained using a bivariate bicycle quantum LDPC code; the gap between the bounds is relatively smaller. The non-asymptotic bounds are relevant for NISQ technologies and hence are potentially of more practical interest.

As an extension of this work, one can explore other quantum LDPC codes with more efficient decoders to tighten the lower bounds. {\color{black} The noise parameter threshold of quantum expander codes being non-zero is of fundamental significance, but the fact that it is quite small motivates us to search for better constructions}. On the converse side, one can tighten the upper bounds further using a better entropy dissipation argument.  Studying the decay of quantum coherence with time using suitable strong data processing inequalities could also improve upper bounds on storage capacity.

\section{Acknowledgements}
This work was supported in part by National Science Foundation grant PHY-2112890.


\appendix

\section{Approximation term in (\ref{eqn:depol_ub_dephase}) is negative}
\label{apndx:negative_constant}
From \cite[eq.~12]{tomamichel2016quantum}, the finite blocklength quantum capacity of a depolarizing channel with parameter $\tilde{p}$ is bounded above by that of a qubit dephasing channel with the same parameter $\tilde{p}$. Then from \cite[eq.~6]{tomamichel2016quantum}, the bound is equivalent to the (classical) binary symmetric channel (BSC) with parameter $\tilde{p}$. Substituting $r = \frac{1}{4}$ in \cite[eq.~602]{polyanskiy2010channel}, we obtain the constant $\log_2\lt( \frac{1}{1-2r} \rt) = 1$, which has a negative sign in \cite[eq.~587]{polyanskiy2010channel}. \hfill \qedsymbol

\section{Quantum expander codes}




\label{apndx:qexpander_codes_params}
In our numerical analysis, the code parameters are assigned the following values: 
$\gamma = 2$, $\delta = 10^{-5}$, $d_A = 7 $, $d_B = 8$.

Some constants that depend on the above parameters are:
\begin{itemize}    
    \item $d = d_B^2 + 2d_B(d_A-1) $, $r = \frac{d_A}{d_B} $,
    \item $\beta_0 = 1-8\delta $, $\beta_1 = 1-16\delta $, $\beta = 0.99 \beta_1 $, 
    \item $\gamma_0 = \frac{r^2\gamma}{\sqrt{1+r^2}} $, $\alpha = \frac{r\beta}{4+2r\beta}$, $c_0 = \frac{4}{d_A (\beta_1 - \beta)}$, 
    \item $h(\alpha) = -\alpha\log_2(\alpha) - (1-\alpha)\log_2(1-\alpha)$,
    \item $C=\lt[\lt(1-2^{h(\alpha) / \alpha} p\rt)\lt(1-\lt(\frac{p}{p_{th}}\rt)^{\alpha}\rt) \rt]^{-1}$, $C' = \alpha \gamma_0$.
\end{itemize}

The noise threshold corresponding to the parameters is $p_{th} = \lt( \frac{2^{-h(\alpha)}}{(d-1)(1+\frac{1}{d-2})^{(d-2)}} \rt)^{\frac{1}{\alpha}} = 2.212 \times 10^{-19}$.

The residual error after refresh is \cite[proof of Theorem~3]{fawzi2020constant} $p_r = K q^{1/c_0}$, where $K$ is a constant independent of $p$ and $q$.

\section{Bivariate bicycle quantum LDPC codes \cite{bravyi2024high}}
\subsection{Logical error rate}

Please refer to \cite[Section~4]{bravyi2024high} for a detailed explanation of bivariate bicycle quantum LDPC codes. The logical error rate of a $[[144,12,12]]$ bivariate bicycle quantum LDPC code protecting a quantum memory using superconducting qubits with physical noise (depolarizing noise) with parameter $p=10^{-3}$ obtained using a fitting formula (see \cite[Table 6]{bravyi2024high}) is $P_e(n) = p^{d_{circ}/2} e^{18.04 + 1337 p - 96007 p^2} \approx 2.3639 \times 10^{-7}$, where $d_{circ}=10$ and $n = 144$.

\subsection{Computing the circuit complexity of $[[144, 12, 12]]$ bivariate bicycle quantum LDPC code}
\label{apndx:qcyc_qldpc_codes_params}
The circuit complexity of error detection and correction are calculated as follows \cite{bravyi2024high}: number of ancillas $n_a = n$, number of Hadamard gates (after initialization and before measurement) for phase-flip errors, $n_H = 2 (n/2) = n$. Since the degree of every check node is 6, total the number of CNOT gates for bit-flip and phase-flip errors (combined) $n_{CNOT}=2 \cdot 6 \cdot (n/2) = 6n$. The number of Z-basis measurements of the ancillas is $n_m = n_a = n$. Finally, the error correction operation of an individual qubit is either $X$ or $Z$ gate based on the syndrome measurement, which implies that the number of error-correcting gates, $n_{ECC} = 2n$. Therefore, the total number of gates involved in error correction (gate complexity), $\chi_g = n_H + n_{CNOT} + n_{ECC} = n + 6n + 2n = 9n$. The total number of components, $\chi = n + n_a + \chi_g + n_m = 12 n$. The circuit complexity of $[[144,12,12]]$ code $\chi = 1728$.

\section{Storage capacity with decoder time-complexity constraint}
\label{apndx:optimiz_depol}

Since we are interested in an upper bound of the objective function in \eqref{eqn:optimiz_gen}, we assert that the logical probability of error over a single refresh cycle is at most $\epsilon$, which implies that the corresponding fidelity is at least $1-\epsilon$ (Lemma~\ref{lemma:prob_err_fidelity}). For the depolarizing noise we know that quantum capacity vanishes for $\tilde{p}(\tau) > \frac{1}{4}$, therefore, we set the threshold $p_{th} = \frac{1}{6}$. Using the upper bound on storage capacity in Corollary~\ref{cor:converse_non_asymp_depol}, the optimization problem becomes
{\color{black}
\begin{align}
\underset{n, \tau}{\mbox{maximize }} & \mc{Q}_{so}(\tilde{p}(\tau), n, \epsilon) \label{eqn:optimiz_depol_apndx1} \\
\mbox{s.t. } & c_1 g(n) \tau_c + \tau_{0} \leq \tau \leq \tau_{\max},
\end{align}}
where $\mc{Q}_{so}(\tilde{p}(\tau), n, \epsilon)$ is the finite blocklength upper bound on quantum capacity of the depolarizing channel with parameter $\tilde{p}(\tau) = 3 p(\tau)/2$, and $\tau_{\max} = \max\{\tau : \tilde{p}(\tau) = \frac{1}{4}\}$.
It can be observed that $c_1 g(n) \tau_c + \tau_{0} \leq \tau$ is an active constraint. Suppose assume the contrary, i.e., the optimizer is
in the interior of the feasible set, then increasing $n$ or decreasing $\tau$ increases the objective function further, which is a contradiction. Therefore, substituting $\tau(n) = c_1 \log(n) \tau_c + \tau_0$ we rewrite \eqref{eqn:optimiz_depol_apndx1} as
{\color{black}
\begin{equation}
\underset{n \in [1, n_{\max}]}{\mbox{maximize }} \mc{Q}_{so}(\tilde{p}(\tau(n)), n, \epsilon), \label{eqn:optimiz_depol_apndx2}
\end{equation}}
where $n_{\max} = n$ satisfying $\tilde{p}(\tau(n)) = \frac{1}{4}$. Note that \eqref{eqn:optimiz_depol_apndx2} is in general a non-convex optimization problem over a closed interval. If the objective function is continuously differentiable with bounded Lipschitz gradients, then the problem can be solved optimally using a line search algorithm proposed in \cite[Algorithm~3]{uthirakalyani2023limits}.

\section{Refining the upper bound on classical storage capacity}

\label{apndx:classical_storage_capacity_ub}

In \cite[Theorem~5]{varshney2015toward}, the upper bound on classical storage capacity, $\mathfrak{C}(\alpha) \leq \frac{C(\alpha)}{1 + \frac{h_2(\alpha)}{2 -  h_2(\alpha/2 + 1/4)}}$, was obtained using an entropy dissipation argument with $\alpha$-noisy non-trivial extremal logic gates (such as AND, OR, NAND, and NOR) were considered as entropy dissipators, and the input of the gate was assumed to be distributed uniformly over the input alphabet. However, the entropy maximizing distribution may not be uniform. Here, we derive a correct upper bound on classical storage capacity. The maximum entropy dissipated by an extremal logic gate can be obtained by maximizing the difference between the entropies of the input and output as follows:

\begin{equation}
    \Delta h^*(\alpha) = \Delta h(\alpha, p^*) = \underset{p}{\max}~\Delta h(\alpha, p),
    \label{eqn:delta_h_opt}
\end{equation}
where $\Delta h(\alpha, p) = h_2\lt( p, \frac{1-p}{3}, \frac{1-p}{3}, \frac{1-p}{3} \rt) - h_2(\alpha + p - 2 \alpha p)$. Consequently, the corrected upper bound on storage capacity of an $\alpha$-noisy classical memory is
\begin{equation}
    \mathfrak{C}(\alpha) \leq \frac{C(\alpha)}{1 + \frac{h_2(\alpha)}{\Delta h^*(\alpha)}}.
\end{equation}

The closed-form solution to the problem \eqref{eqn:delta_h_opt} is not known. Therefore, we obtain $\Delta h^*(\alpha)$ numerically for every component noise level $\alpha$. The upper bounds on the classical storage capacity  storage are shown in Figure~\ref{fig:classical_storage_capacity_old_vs_new_UB}.

\begin{figure}[H]    
    \includegraphics[scale=0.4]{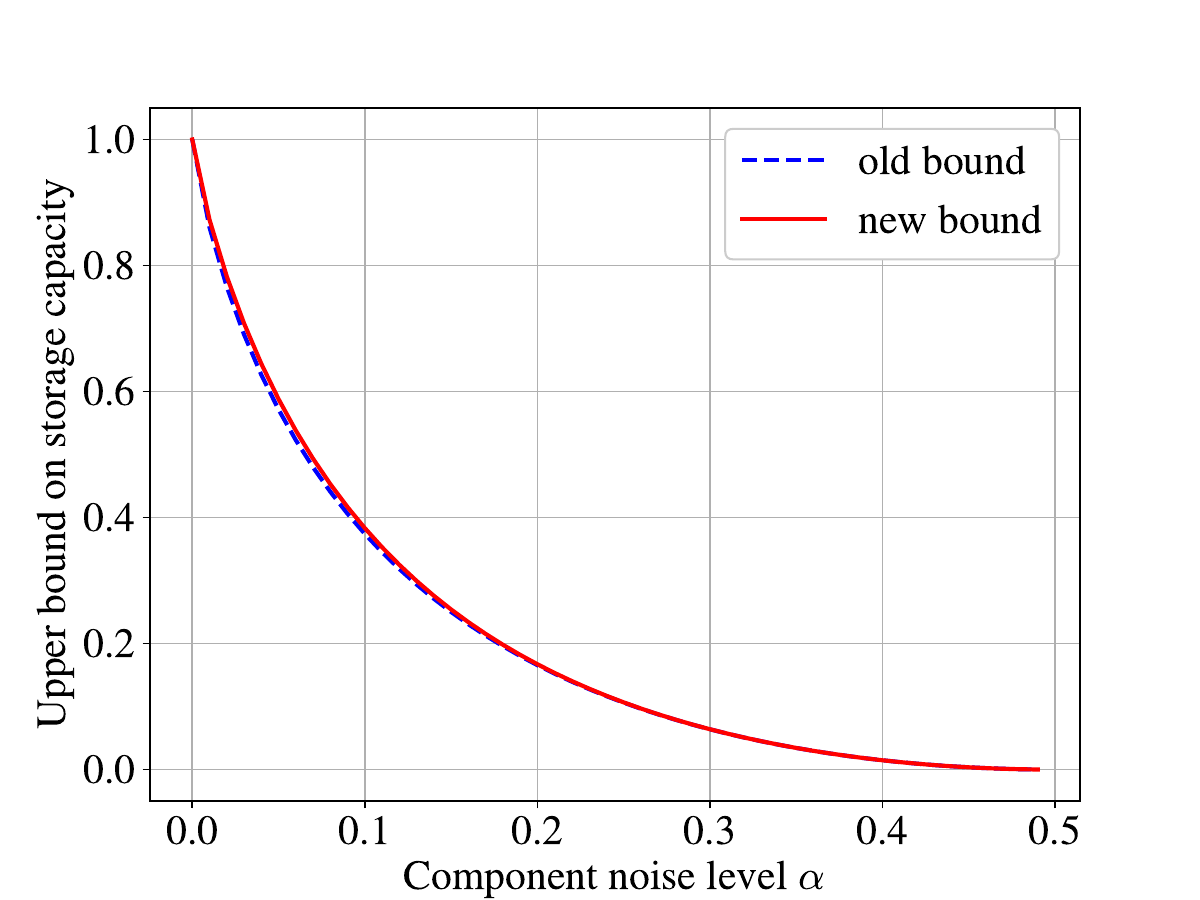}
    \caption{Comparison between the old and new upper bounds on classical storage capacity. The gap between the bounds is relatively significant in the range $\alpha \in [0, 0.2]$.}
    \label{fig:classical_storage_capacity_old_vs_new_UB}
\end{figure}

\nocite{*}
\bibliography{bibfile}

\end{document}